\def\BibTeX{{\rm B\kern-.05em{\sc i\kern-.025em b}\kern-.08em
    T\kern-.1667em\lower.7ex\hbox{E}\kern-.125emX}}
\newcommand{\Cb}{\mathbb{C}}
\newcommand{\Eb}{\mathbb{E}}
\newcommand{\g}{\mathbf{g}}
\newcommand{\Hb}{\mathbf{H}}
\newcommand{\h}{\mathbf{h}}
\newcommand{\hh}{\widehat{h}}
\newcommand{\Rb}{\mathbb{R}}
\newcommand{\sigt}{\widetilde{\sigma}}
\renewcommand{\v}{\mathbf{v}}
\newcommand{\w}{\mathbf{w}}
\newcommand{\x}{\mathbf{x}}
\newcommand{\y}{\mathbf{y}}
\newcommand{\z}{\mathbf{z}}
\newcommand{\mc}[1]{\mathcal{#1}}
\newcommand{\mb}[1]{\mathbb{#1}}
\newtheorem{assum}{Assumption}
\newcommand{\The}{\mathbf{\Theta}}
\newcommand{\theb}{\boldsymbol{\theta}}
\newcommand{\phib}{\boldsymbol{\phi}}
\newcommand{\alg}{$\mathsf{ROAR-Fed}~$}
\newcommand{\algns}{$\mathsf{ROAR-Fed}$}
\newcommand{\algp}{$\mathsf{PROAR-PFed}$}
\newcommand{\alghota}{$\mathsf{HOTA-FedGradNorm}$}
\title{Adaptive Personalized Over-the-Air Federated Learning with Reflecting Intelligent Surfaces
\thanks{This work was supported in part by NSF CNS-2112471.}
\thanks{This paper was presented in part at IEEE International Conference on Communications (ICC), May 2023~\cite{mao22roar}, IEEE International Conference on Communications Workshops (ICC Workshop), May 2023~\cite{mao22iccw} and IEEE International Conference on Acoustics, Speech, and Signal Processing (ICASSP), April 2024~\cite{mao23personal}.}
\thanks{The authors are with the Department of Electrical and Computer Engineering, The Ohio State University, Columbus, OH 43210 USA (e-mail: mao.518@osu.edu; yener@ece.osu.edu).}}
\author{
  \IEEEauthorblockN{Jiayu Mao, {\it  Student Member, IEEE} and Aylin Yener, \it{Fellow, IEEE}}\\
  
}
\begin{document}

\sloppy
\allowdisplaybreaks[1]

\maketitle

\vspace{-0.5in}


\begin{abstract}
Over-the-air federated learning (OTA-FL) unifies communication and model aggregation by leveraging the inherent superposition property of the wireless medium. This strategy can enable scalable and bandwidth-efficient learning via simultaneous transmission of model updates using the same frequency resources, if care is exercised to design the physical layer jointly with learning. In this paper, a federated learning system facilitated by a heterogeneous edge-intelligent network is considered. The edge users (clients) have differing user resources and non-i.i.d. local dataset distributions. A general non-convex learning objective is considered for the model training task(s) at hand. 
We augment the network with Reconfigurable Intelligent Surfaces (RIS) in order to enhance the learning system.
We propose a cross-layer algorithm that jointly assigns communication, computation and learning resources. 
In particular, we adaptively adjust the number of local steps in conjunction with RIS configuration to boost the learning performance.
Our system model considers channel noise and channel estimation errors in both the uplink (model updates) and downlink (global model broadcast), employing dynamic power control for both.
We provide the convergence analysis for the proposed algorithms and extend the frameworks to personalized learning.
Our experimental results demonstrate that the proposed algorithms outperform the state-of-the-art joint communication and learning baselines.
\end{abstract}

\begin{IEEEkeywords}
Reconfigurable Intelligent Surfaces (RIS), Federated Learning, Over-the-Air Computation, 6G
\end{IEEEkeywords}


\section{Introduction}
\label{sec:intro}
Federated learning (FL) \cite{mcmahan2017} has received considerable attraction and found a wide range of application in recent years as a promising distributed learning paradigm.
FL employs iterative local training coordinated by a parameter server (PS) for a potentially large number of users to collaboratively train a machine learning model without sharing their raw data.
Throughout each iteration, users train their local models using their own datasets and transmit only model parameters or updates to the PS, and the PS then aggregates these local updates to compute the global model for the next round.
More recently, personalized federated learning (PFL) has emerged for heterogeneous FL settings, i.e., non-i.i.d. datasets~\cite{li2021ditto}.
Although FL is a naturally suitable framework for mobile edge networks, care must be exercised when implementing it in wireless networks.
Taking into account the inherent features of wireless channels more efficient learning by the edge can be facilitated.

Over-the-air federated learning (OTA-FL) leverages the wireless medium's superposition property for model aggregation. 
All users simultaneously transmit their suitably scaled local updates using the same time-frequency resources, enabling the PS to directly obtain the analog aggregated model.
The majority of previous research on OTA-FL have mainly considered noise on uplink transmissions, and an error-free downlink \cite{amiri2020,zhu2019broadband}.
In~\cite{yang22}, we have provided an integrated computation and power control design to enhance the learning performance for heterogeneous systems. 
Recently, several works have started to focus on the impact of simultaneous noisy downlink and uplink~\cite{wei2022federated,zhang2023deep}. 
While these studies offer valuable insights into the convergence and energy efficiency challenges posed by noisy communication channels, they typically assume perfect channel state information (CSI), which is essential for precise over-the-air aggregation. In reality, only estimated CSI is available, leading to potential signal misalignment and degraded learning performance. Several efforts have been made to address this limitation.
In~\cite{shao2021federated}, the authors have proposed a whitened matched filtering and sampling scheme to address the issue of imperfect CSI on the receiver end.
In~\cite{evgenidis2023over}, the authors have introduced an adaptive pilot retransmission policy and optimization framework to minimize MSE under imperfect CSI.
In~\cite{mao22}, we have extended~\cite{yang22} to analyze the impact of imperfect CSI in convergence analysis.

Recently, reconfigurable intelligent surfaces (RIS), including passive RIS~\cite{wu2019intel}, active RIS~\cite{shi2024federated}, hybrid RIS~\cite{jin2023hybrid} and STAR-RIS~\cite{ni2022star}, have emerged as a promising solution to enable spectral efficiency and high reliability in next-generation networks.
Among them, passive RIS is a solution that consists of a flat meta-surface with numerous reflecting elements, which can be dynamically controlled to tune phase shifts in the desired manner.
Despite the recent advances in active and hybrid RIS types, passive RIS remains attractive with its cost-effectiveness and energy efficiency, making it ideal for large-scale and low-cost deployments in practical communication systems.
In this work, unless otherwise specified, the term RIS refers to passive RIS.
By reflecting the signals toward intended directions, RIS can proactively transform the radio propagation environment into a more favorable one, with judicious deployment.
As such, RIS has been leveraged to improve energy efficiency~\cite{zhang2021energy}, privacy~\cite{shi2024federated}, latency~\cite{le2023federated}, and rate~\cite{ni2022integrating}.
Given its advantages, RIS holds strong potential for integration with federated edge learning, enhancing model aggregation and overall learning performance~\cite{yang2020fed}.
Several recent works have explored RIS-assisted OTA-FL communication systems.
Reference~\cite{zheng2022balancing} aims to minimize the mean-squared error (MSE) of the aggregated model and retain the recoverability of local models by concurrently optimizing the RIS configuration and beamformers, considering scenarios with both perfect and imperfect CSI.
\cite{wang2021fed,ni2021fed,kim2024reconfigurable} maximize the number of selected devices under predefined MSE constraints by optimizing the device selection and RIS design, showing the convergence improvement brought by deploying RIS.
Reference~\cite{liu2021csit} leverages RIS to reduce downlink feedback overhead, configuring RIS phase shifts to align local model updates coherently when the CSI at the transmitter (CSIT) is unavailable.
\cite{yang2022federated} aims to optimize FL system utility as well as efficient spectrum learning with the aid of multiple RISs.
In~\cite{li2022one}, RIS is utilized to aid one-bit communication for sign-based FL system.
In~\cite{wang2023graph}, a graph neural network (GNN) based approach is developed to minimize the time-average error of convergence upper bound.
In~\cite{zhang2024irs}, a weight-selection based framework is introduced to minimize MSE in RIS-assisted OTA-FL system.
Only a few efforts have been made to explore unified communication and learning approaches for the RIS-assisted OTA-FL model.
Reference \cite{liu2021risfl} unifies communication and learning optimization by jointly designing beamforming, RIS configuration, and device selection while assuming a static time-invariant channel with perfect CSI.
In~\cite{zhao2023performance}, the problem of the optimality gap and the
energy consumption minimization is solved by employing a Lyapunov optimization framework.
In~\cite{zheng2024novel}, a RIS-assisted simultaneous wireless information and power transfer (SWIPT) and OTA-FL system is introduced, optimizing device selection, beamforming, and RIS design to improve convergence and accuracy.

Though having a significant interest in machine learning, the specific application of personalized FL (PFL) in wireless systems, especially in 6G environments, remains underexplored. 
Existing research, such as~\cite{sami2022over} on clustered federated learning and~\cite{mortaheb2022personalized, chen2023personalizing} on multi-task learning in wireless communication, overlooks the integration of RIS.
While RIS is considered in~\cite{shi2024empowering}, its focus remains on clustered FL.
By contrast, personal RIS devices offer a more tailored approach, aligning closely with individual client needs and enhancing the learning process.
The notion of personalized RIS in personalized learning has been first proposed in our conference paper~\cite{mao23personal}, building on single RIS-assisted OTA-FL framework in conference papers~\cite{mao22roar,mao22iccw} to develop a personal RIS-assisted framework where each user also needs their own personalized model.

Distinct from the majority of existing research that focuses on minimizing MSE, in this paper, we propose a cross-layer algorithm that concurrently optimizes communication and computation resources to improve global and personalized learning performance in RIS-assisted OTA-FL systems.
Unlike existing joint communication and learning design, we take into account the practical time-varying physical layer under imperfect CSI at edge devices.
We consider a general system model, which addresses a noisy uplink {\it and} downlink communication system. 
The presence of noise and imperfect CSI in the downlink introduces a noisy local starting point during each global iteration, which inevitably generates extra error terms in convergence analysis, an aspect that our previous work has not addressed.
We summarize our main contributions as follows.
\begin{list}{\labelitemi}{\leftmargin=1em \itemindent=-0.5em \itemsep=.2em}
\item We develop a joint communication and learning design that adaptively adjusts the RIS phase shifts (channel), the number of local update steps (computation) and transmission power (communication) in concert during each global iteration to alleviate the effects of both time-varying imperfect CSI and data/system heterogeneity.
Our proposed algorithm,~\alg (\underline{R}IS-assisted \underline{O}ver-the-air \underline{A}daptive \underline{R}esource Allocation for \underline{Fed}erated learning), employs dynamic power control for both downlink and uplink transmissions.
\item We provide the convergence analysis of the proposed algorithms for a general FL with non-convex objective, revealing the overall impacts of imperfect CSI and communication noise.
The convergence upper bound quantifies learning errors and communication errors, where some error terms contain both types of errors due to the cross-layer design.
\item We evaluate the performance of our algorithms in a heterogeneous network with non-i.i.d. local datasets and diverse device resources on two different datasets (MNIST, Fashion-MNIST).
Numerical results demonstrate that our algorithm significantly outperforms all state-of-the-art.
\item We extend the proposed framework to personalized FL with personal RISs, the extended algorithm is termed~\algp~(\underline{P}ersonal \underline{R}IS-assisted \underline{O}ver-the-\underline{A}ir \underline{R}esource Allocation for \underline{P}ersonalized \underline{Fed}erated Learning). We validate the potential of the personal RIS model for personalization through numerical experiments.
\end{list}

The remainder of the paper is organized as follows. 
Section~\ref{sec: prelim} presents the system model. 
We discuss our cross-layer approach in Section~\ref{sec: alg}.
The convergence analysis is provided in Section~\ref{sec: conv}. 
Section~\ref{sec: extension} introduces the extension to personalization.
In Section~\ref{sec: exp}, we show the numerical results. Section~\ref{sec: conclusion} concludes the paper. Proofs are provided in Appendix A-C.

\section{System Model} \label{sec: prelim}
\subsection{Federated Learning Model} 
\label{subsec: fl}
We consider a FL system consisting of a parameter server (PS) and $m$ edge devices.
Each device $i \in [m]$ contains its local dataset $D_i$ whose data points are randomly selected from the distribution $\mc{X}_i$. 
We assume that the distributions of local datasets are non-i.i.d., as is the case in practice. That is, $\mc{X}_i \neq \mc{X}_j$ if $ i \neq j, \forall i, j \in [m]$.
The goal of FL is to minimize a global empirical loss function:
\begin{equation}
    \min_{\w \in \mathbb{R}^q}F(\w) \triangleq \min_{\w\in\mathbb{R}^q} \sum_{i \in [m]} \alpha_i F_i(\w, D_i), 
    \label{eq: objective}
\end{equation}
where $\w$ is a $q$-dimension parameter vector of global model,  $\alpha_i = \frac{| D_i |}{\sum_{i \in [m]} | D_i |}$ represents the weight of edge device $i$, $F_i(\w, D_i) \triangleq \frac{1}{| D_i |} \sum_{\xi^i_j \in D_i} F(\w, \xi^i_j)$ is the local learning objective and $\xi^i_j$ is the $j$-th sample in $D_i$.
We assume that $F_i(\w, D_i)$ are general non-convex objective functions, and edge devices have datasets of different sizes, $\alpha_i \ne \alpha_j$, $i\ne j$.

In each round of FL, PS broadcasts the current global model parameters $\w_{t}$ to edge devices.
Each device $i$ starts local training after receiving the global model $\w_t^i$. Specifically, it performs stochastic gradient descent (SGD):
\begin{equation}
\w^i_{t, k+1} = \w^i_{t, k} - \eta_t \nabla F_{i}(\w^i_{t, k}, \xi^i_{t, k}), \quad k = 0,\ldots,\tau_t^i-1, \label{equ:sgd}
\end{equation}
where $k$ is the index of local step, $\xi^i_{t, k}$ is the randomly chosen data sample, $\eta_t$ is the local learning rate and $\tau_t^i$ represents the number of local steps. We design $\tau_t^i$ to vary across edge devices and each round.
Next, all edge devices upload their local updates to the PS. Then the PS aggregates and updates the global model parameter accordingly:
\begin{equation}
    \w_{t+1} = \w_t + \sum_{i \in [m]}\alpha_i(\w_{t,\tau_t^i-1}^i - \w_{t,0}^i).
\end{equation}
The above training procedure continues until convergence.
Note that in OTA-FL, communication and aggregation occur simultaneously at the PS as a result of the inherent superposition property of the wireless channel.

\vspace{-0.1in}
\subsection{RIS-Assisted Communication Model} 
\label{subsec: comm}

\subsubsection{Uplink Communication Model} 
\label{subsubsec: uplink}

\begin{figure}[t]
  \centering
  \includegraphics[width=\linewidth]{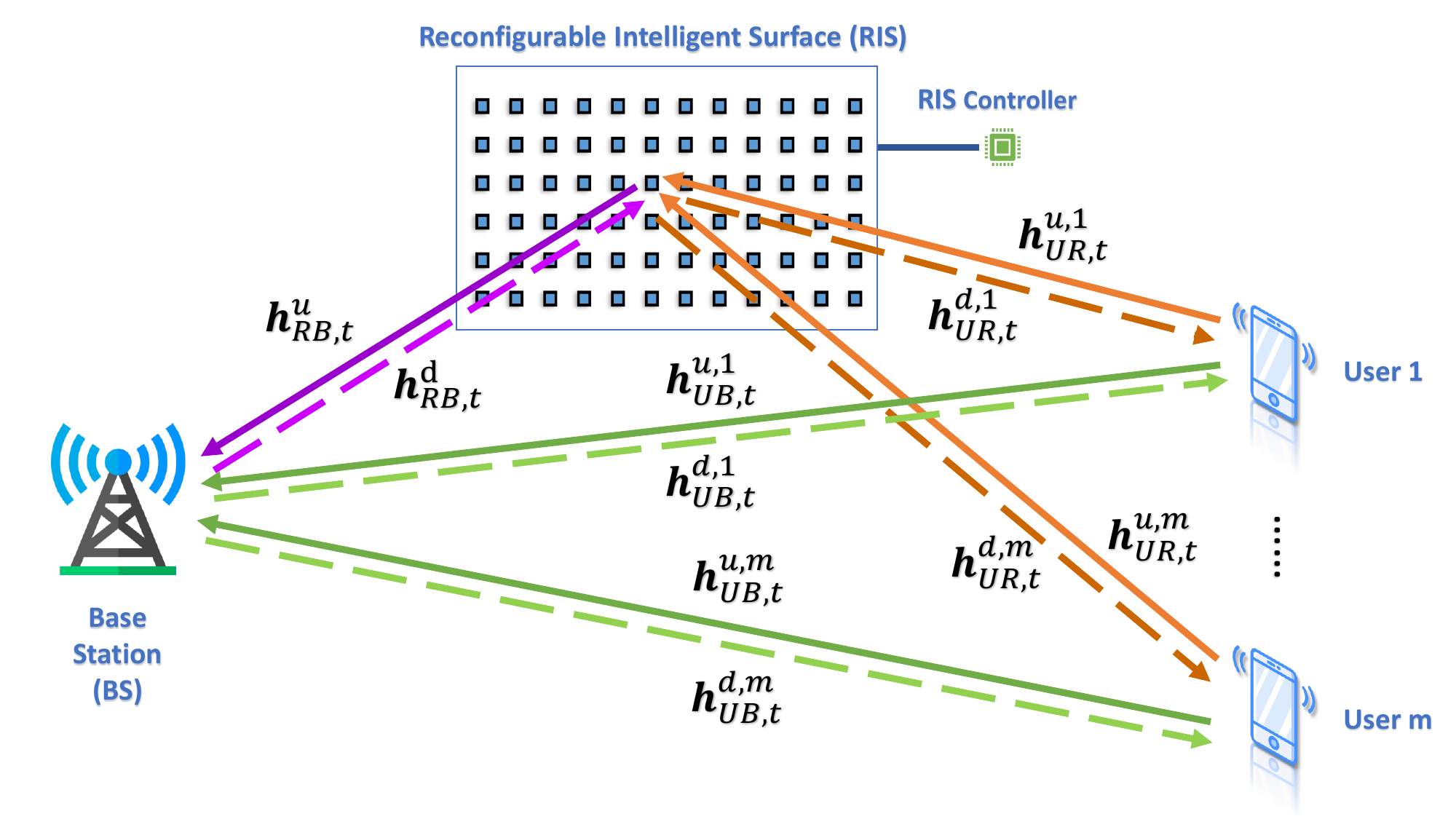}
  \vspace{-15pt}
  \caption{The RIS-assisted communication system.}
  \vspace{-10pt}
  \label{fig:sysmodel}
\end{figure}


As shown in Fig.~\ref{fig:sysmodel} with solid lines, we consider an RIS-assisted uplink communication model\footnote{Without loss of generality, we consider synchronous models.}, where an RIS equipped with $N$ passive elements is embedded between $m$ single-antenna edge devices and a single-antenna PS.
We adopt a block fading channel in which the channel coefficients remain unchanged within one global communication round and vary independently between communication rounds.
In $t$-th round, $h_{UB,t}^{u,i} \in \Cb$ is uplink channel from edge device $i$ to PS, $\h_{RB,t}^{u} \in \Cb^N$ is channel from RIS to PS, and $\h_{UR,t}^{u,i} \in \Cb^N$ is channel from user $i$ to RIS.
We denote RIS $n$-th element as $\theta_{n,t} = e^{j \phi_{n,t}}, n \in [N]$. 
We use a diagonal matrix $\The_t$ to represent the phase shift matrix of the RIS, i.e., $\The_t = diag (\theta_{1,t}, \theta_{2,t},\cdots, \theta_{N,t})$, and update it in the beginning of each global round.
Denote the transmit signal from device $i$ as $\x_t^i$, the PS receives:
\begin{equation}
    \y_t^u = \sum_{i \in [m]} (h_{UB,t}^{u,i} + (\h_{UR,t}^{u,i})^H \The_t \h_{RB,t}^u)\x^i_t + \z_t^u, \label{equ:upreceivesig}
\end{equation}
where $\z_t^u$ is an i.i.d. additive white Gaussian noise (AWGN) vector, whose entries have zero mean and variance $\sigma_{c,u}^2$.
To simplify the notation, we define the equivalent RIS-aided link as $\g_t^{u,i} = ((\h_{UR,t}^{u,i})^H \Hb_{RB,t}^u)^H \in \Cb^N$, where $\Hb_{RB,t}^u = diag(\h_{RB,t}^u)$.
Let $\theb_t=(\theta_{1,t},...,\theta_{N,t})^T$ be the phase vector of RIS.
We further define the effective $i$-th-user-PS channel response in $t$-th round as:
\begin{equation}
    h_t^{u,i}=h_{UB,t}^{u,i} +  (\h_{UR,t}^{u,i})^H \The_t \h_{RB,t}^u.
\end{equation}
Then the corresponding received signal can be written as:
\begin{equation}
\begin{array}{ll}
     \y_t^u & = \sum_{i \in [m]} (h_{UB,t}^{u,i} +  (\g_t^{u,i})^H \theb_t)\x^i_t + \z_t^u   \\
     & = \sum_{i \in [m]} h_t^{u,i} \x^i_t + \z_t^u.
\end{array}
\end{equation}
The transmit power constraint of device $i$ in $t$-th round is:
\begin{equation}
\mathbb{E}[\| \x^i_t \|^2] \leq P_t^i, \forall i \in [m], \forall t, \label{inequ:power}
\end{equation}
where $P_t^i$ is the maximum power budget of device $i$.

\subsubsection{Downlink Communication Model} 
\label{subsubsec: downlink}

As illustrated in Fig.~\ref{fig:sysmodel} with dashed lines, we consider an RIS-assisted downlink communication model.
Similarly, we assume a block fading channel model; the channel gains are invariant within one communication round and change independently in different rounds.
In the $t$-th round, $h_{UB,t}^{d,i} \in \Cb$ is downlink channel from PS to device $i$, $\h_{RB,t}^{d} \in \Cb^N$ is channel from PS to RIS, $\h_{UR,t}^{d,i} \in \Cb^N$ is channel from RIS to device $i$.
Given transmit signal $\x_t^d$, edge device $i$ receives:
\begin{equation}
    \y_t^{d,i} = (h_{UB,t}^{d,i} + (\h_{RB,t}^d)^H \The_t \h_{UR,t}^{d,i})\x_t^d + \z_t^{d,i}, \label{equ:dlrecvsig}
\end{equation}
where $\z_t^{d,i}$ represents an i.i.d. AWGN with zero mean and variance $\sigma_{c,d}^2$.
For ease of notation, we define effective cascaded downlink channel gain of device $i$ as:
\begin{equation}
    h_t^{d,i}=h_{UB,t}^{d,i} + (\h_{RB,t}^d)^H \The_t \h_{UR,t}^{d,i}.
\end{equation}
Therefore, the received signal can be rewritten as: 
\begin{equation}
    \y_t^{d,i} = h_t^{d,i}\x_t^d + \z_t^{d,i}.
\end{equation}
The downlink power constraint is:
\begin{equation}
\mathbb{E}[\| \x^d_t \|^2] \leq P_t^d, \forall t, \label{inequ:dlpower}
\end{equation}
where $P_t^d$ is the maximum power at the PS in round $t$.

In this work, edge devices only have access to imperfect CSI, i.e., users have estimated CSI for both uplink and downlink.
We denote estimated CSI as $\hh_t$ of each path:
\begin{equation}
\hh_t = h_t + \Delta_t, \forall t, \label{equ: estcsi}
\end{equation}
where $\Delta_t$ represents the error of estimation with zero mean and variance $\sigt_h^2$.
Note that all paths have channel estimation error.
We define the overall uplink channel estimate as $\hh_t^{u,i}$, and $\hh_t^{d,i}$ as the overall estimated downlink CSI.

Our single RIS model, though centralized, is capable of addressing personalized federated learning challenges, as will be shown in numerical results.
Additionally, this model can naturally evolve into a distributed personal RISs model, a key extension that we will elaborate in Section~\ref{sec: extension}, showcasing its enhanced applicability in personalized learning scenarios.

\vspace{-5pt}
\section{Joint Communication and Learning Design} 
\label{sec: alg}
In this section, we present a cross-layer approach for joint optimization of communication, computation, and learning resources to enhance system performance. The RIS phase matrix is updated at the start of each global round, followed by dynamic local step design to meet transmit power constraints. 
Meanwhile, we employ dynamic power control schemes for both uplink and downlink communications.
Algorithm~\ref{alg:roar-fed} outlines the case for error-free downlink, while Algorithm~\ref{alg:apaf} addresses the noisy uplink and downlink scenario.

\vspace{-10pt}
\subsection{Power Control for Uplink}
\label{subsec:pcuplink}

\begin{algorithm}[t!] 
    \caption{RIS-assisted Over-the-Air Adaptive Resource Allocation for Federated Learning (\algns)} \label{alg:roar-fed} 
    \begin{algorithmic}[1]
    \STATE 
    \emph{\bf Initialization: global model $\w_0$, $\theb_0$, $\beta_t^i$, $\tau_t^i, i \in [m]$.}
    \FOR{$t=0, \dots, T-1$}
    \STATE {Server first finds the client with the maximum number of local steps in round $t-1$, then applies SCA to compute the RIS phase update:} 
        \FOR{$j=0, \dots, J-1$}
        \STATE {Server updates RIS phase design $\phib_t$ by~\eqref{equ:phase}}.
        \ENDFOR
        \STATE {Server broadcasts the global model $\w_t$.}
        \FOR{each user $i \in [m]$}
        \STATE {Each client finds $\tau_t^i$ to satisfy the power constraint~\eqref{inequ:power} and trains local model by~\eqref{equ:sgd}, then designs $\beta_t^i$ by~\eqref{equ:betaiimp} and transmits $\x_t^i$ by~\eqref{equ:transig}.}   
        \ENDFOR
    \STATE {The server aggregates and updates global model by~\eqref{equ:globalup}.}
    \ENDFOR
    \end{algorithmic}
\end{algorithm}

We employ a dynamic power control (PC) at both PS and edge devices. 
Specifically, in round $t$, device $i$ computes its transmit signal $\x_t^i$ after completing its local training and uploads it to PS.
We introduce adaptive PC factors $\beta_t^i$ and $\beta_t^u$ to adjust the power levels of device $i$ and PS, respectively.
Thereby, transmit signal $\x_t^i$ can be expressed as:
\begin{equation}
    \x_t^i = \beta_t^i (\w^i_{t, \tau_t^i} - \w^i_{t, 0}), \label{equ:transig}
\end{equation}
When the PS receives the aggregated signal~\eqref{equ:upreceivesig}, it scales with PC factor $\beta_t^u$ to update the global model parameter:
\begin{align} \label{equ:globalup}
    \w_{t+1} &= \w_{t} + \frac{1}{\beta_t^u}\sum_{i=1}^{m} h_t^{u,i} \x_t^i  + \tilde{\z}_t^u, 
\end{align}
where $\tilde{\z}_t^u$ is scaled noise whose variance of each entry is $\frac{\sigma_{c,u}^2}{(\beta_t^u)^2}$.

We propose a channel inversion policy combined with dynamic local steps to mitigate channel effects. By adjusting local steps and applying channel inversion to counter fading, we optimize both learning and communication resources simultaneously. The adaptive PC parameter for device $i$ is designed as:
\begin{equation} \label{equ:betaiimp}
    \beta_t^i = \frac{\beta_t^u \alpha_i}{\tau^i_t \hh_t^{u,i}}.
\end{equation}
Note that only estimated CSI $\hh_t^{u,i}$ is available at edge device $i$. Imperfect CSI can cause signal misalignment during aggregation, degrading learning performance. By adjusting local steps based on the current imperfect CSI, we mitigate this issue while maximizing local computation resources.
Additionally, we set another criteria for device $i$:
\begin{equation} \label{inequ:betai}
    3\eta_t^2 \beta_t^i \tau_t^i G^2 \leq P_t^i,
\end{equation}
where $G$ is the upper bound of stochastic gradient, defined in Assumption~\ref{a_bounded} in Sec.~\ref{sec: conv}.
Equation~\eqref{inequ:betai} also aids in designing the RIS phase matrix in round $t$, optimizing the propagation environment of the RIS-assisted system, as discussed in the next subsection. 
Additionally, \eqref{inequ:betai} ensures the convergence of Algorithm~\ref{alg:roar-fed}, as shown in Section~\ref{sec: conv}. 
After phase design, each device selects its local update steps $\tau_t^i$ by meeting the transmit power constraint, as illustrated in Algorithm~\ref{alg:roar-fed}. 

\vspace{-0.1in}
\subsection{Phase Design}
\label{subsec:phase}

As discussed in Section~\ref{subsec:pcuplink}, we design the RIS phase shifts according to~\eqref{inequ:betai}. 
Although a single RIS improves the wireless environment, the constraint~\eqref{inequ:betai} applies to all devices, making it necessary to select a device for whom to update the RIS phase in each round. 
To achieve this, we choose the device with the maximum local steps from the previous iteration and optimize the phase matrix based on its criteria.
By taking this approach, we can alternatively 
ensure reliable and efficient communication for all devices over multiple rounds\footnote{Recall that we have a block fading model.}.
First, we substitute~\eqref{equ:betaiimp} into~(\ref{inequ:betai}):
\begin{equation} \label{inequ:phase}
    (\g_t^i)^H \theb_t \geq \frac{3 \eta_t^2 \beta_t^u \alpha_i G^2}{P_t^i} - \hh_{UB,t}^{u,i}, 
\end{equation}
where $\g_t^i$ and $\theb_t$ are defined in Section~\ref{subsec: comm}.
In order to obtain the desired phase vector $\theb_t$, we formulate the following optimization problem:
\begin{equation} \label{prob:phase}
    \begin{aligned} 
        & \mathop{min}\limits_{\theb_t} \quad \|(\g_t^i)^H \theb_t - \frac{3 \eta_t^2 \beta_t^u \alpha_i G^2}{P_t^i} + \hh_{UB,t}^{u,i} \|_2^2  \\
        &\begin{array}{ll}
        s.t. & |\theta_{t,n}|=1 , \quad n=1,...,N.
        \end{array}
    \end{aligned}
\end{equation}

Note that the constraint of RIS elements renders problem~\eqref{prob:phase} non-convex, complicating the search for an optimal solution.
To address this, we apply a successive convex approximation (SCA) approach to find a stationary point~\cite{scutari2013,mao22papa}.
SCA solves a series of simpler convex problems iteratively by using surrogate objective functions that are differentiable and strongly convex~\cite{scutari2013}, enabling effective approximation of the original non-convex problem.

First, the original objective function is defined:
\begin{equation} \label{equ:phasefunc}
\begin{array}{ll}
     f(\theb_t) & = || s_t^i - (\g_t^i)^H \theb_t||_2^2   \\
     & = (s_t^i)^* s_t^i - 2 Re \{ \theb_t^H \textbf{v}\} + \theb_t^H \textbf{U} \theb_t ,
\end{array}
\end{equation}
where $s_t^i = \frac{3 \eta^2 \beta_t^u \alpha_i G^2}{P_t^i} - \hh_{UB,t}^i$, $\textbf{v} = s_t^i \g_t^i$, $\textbf{U} = \g_t^i (\g_t^i)^H$.
Next, we substitute each phase component with $\theta_{n,t} = e^{j \phi_{n,t}}, \phi_{n,t} \in \Rb$.
Note that, since $s_t^i$ is constant, the problem is equivalent to minimize:
\begin{equation}
    f_1(\phib_t) = (e^{j\phib_t})^H \textbf{U} e^{j\phib_t} - 2 Re\{(e^{j\phib_t})^H \textbf{v}\},
\end{equation}
where $\phib_t = (\phi_{1,t},...,\phi_{N,t})^T$.
Now we adopt the SCA approach. 
In round $j$, we use the second order Taylor expansion to design the surrogate function of $f_1(\phib_t)$ at point $\phib_i^j$:
\begin{equation}
\begin{array}{ll}
      g(\phib_t,\phib^j_t) & = f_1(\phib^j_t) + \nabla f_1(\phib^j_t)^T (\phib_t - \phib^j_t)\\
     &  + \frac{\lambda}{2} ||\phib_t - \phib^j_t||_2^2,
\end{array}
\end{equation}
where $\nabla f_1(\phib^j_t)$ represents the gradient. We choose the parameter $\lambda$ such that the surrogate function satisfies the condition $g(\phib_t,\phib^j_t) \geq f_1(\phib_t) $.
Then we update $\phib_t$ iteratively until convergence by the following rule:
\begin{equation}
    \phib_t^{j+1} = \phib^j_t - \frac{\nabla f_1(\phib^j_t)}{\lambda}. \label{equ:phase}
\end{equation}
After the SCA process is completed, we get the updated RIS phase vector for learning round $t$ as $\theb_t= e^{j\phib_t}$.

We emphasize that dynamic power control enables edge devices to meet communication constraints while exploiting local computation resources. Additionally, we incorporate RIS phase design into each iteration to enhance learning performance under imperfect CSI.

\vspace{-10pt}
\subsection{Power Control for Downlink}
\label{subsec:downpc}
Now, we consider a noisy downlink channel model from Section~\ref{subsubsec: downlink}.
In a noiseless downlink, edge devices would receive the perfect global model during FL broadcasting, leaving only uplink transmission to handle, as shown in Algorithm~\ref{alg:roar-fed}.
However, with a noisy downlink, devices cannot accurately recover the global model.
\begin{algorithm}[t!] 
    \caption{RIS-Assisted Over-the-Air Adaptive Federated Learning with Noisy Downlink} \label{alg:apaf} 
    \begin{algorithmic}[1]
    \STATE 
    \emph{\bf Initialization: $\w_0$, $\theb_0$, $\beta_t^i$, $\tau_t^i, i \in [m]$.}
    \FOR{$t=0, \dots, T-1$}
    \STATE{\bf $\bullet$ Phase update:}
    \STATE {PS selects the user with maximum $\tau_{t-1}^i$ and updates RIS phase shifts by SCA method~\eqref{equ:phase}.} 
    \STATE{\bf $\bullet$ Downlink transmission:}
        \STATE {PS broadcasts global model $\w_t$ by~\eqref{equ:dltransig}, users estimate the starting global model weights by~\eqref{equ:wt0i}.}
    \STATE{\bf $\bullet$ Uplink transmission:}
        \FOR{user $i, i \in [m]$ in parallel}
        \STATE {User $i$ computes $\tau_t^i$ to meet communication constraints and does local training. Then it finds $\beta_t^i$ by~\eqref{equ:betaiimp} and uploads $\x_t^i$.} 
        \ENDFOR
    \STATE {PS updates the global model by~\eqref{equ:globalup}.}
    \ENDFOR
    \end{algorithmic}
\end{algorithm}

In each training round, the PS broadcasts the global model vector to all edge devices via noisy downlink channels.
At the $t$-th iteration, the PS transmits $\x_t^d$ which carries information regarding the global model, with a downlink PC factor $\beta_t^d$.
The broadcasting signal is:
\begin{equation}
    \x_t^d = \beta_t^d \w_{t}, \label{equ:dltransig}
\end{equation}
where $\w_{t}$ is the global model vector.
After receiving ~\eqref{equ:dlrecvsig}, edge device $i$ scales the signal with estimated CSI as following:
\begin{equation} \label{equ:wt0i}
    \hat{\w}_t =  \frac{1}{\beta_t^d \hh_t^{d,i}} \y_t^{d,i} = \frac{h_t^{d,i}}{\hh_t^{d,i}} \w_{t} + \tilde{\z}_t^{d,i},
\end{equation}
where $\tilde{\z}_t^{d,i}$ is the scaled downlink noise.
Then device $i$ does local training.
Note that in this case, each device starts with the global model estimate $\w^i_{t, 0} = \hat{\w}_t$ rather than perfect $\w_t$. 
With imperfect CSI, the estimated signal includes both channel noise and misalignment, potentially degrading learning performance. We will analyze this effect in Section~\ref{sec: conv}.
We summarize the joint learning and communication design for noisy uplink and downlink in Algorithm~\ref{alg:apaf}.

\vspace{-5pt}
\subsection{Computation Complexity}
We begin by analyzing the computational complexity of the SCA algorithm for RIS phase design.
As shown in Algorithm~\ref{alg:roar-fed}, SCA method involves $J$ gradient descent iterations, each updating an $N$-dimensional phase vector, resulting in a complexity of $\mathcal{O}(JN)$.
In the error-free downlink scenario, each global round involves the PS broadcasting the global model, and all edge devices receive perfect parameters, resulting in a complexity of $\mathcal{O}(1)$.
Each device performs local training via SGD and computes the transmit signal in parallel, leading to an uplink transmission complexity of $\mathcal{O}(\max(\tau_t^i)q + q)$, where $q$ is the model parameter dimension.
Model aggregation has a complexity of $\mathcal{O}(1)$ due to OTA-FL's simultaneous communication and computation. Therefore, the overall complexity of Algorithm~\ref{alg:roar-fed} is $\mathcal{O}(T(JN + 1 + \max(\tau_t^i)q + q))$, where $T$ is the total number of iterations.
In the noisy downlink scenario, additional complexity arises from the downlink power control as each user estimates the global model vector. Since all users perform~\eqref{equ:wt0i} independently,  the complexity remains $\mathcal{O}(q)$.
Combining this with the phase design and uplink transmission, the total complexity of Algorithm~\ref{alg:apaf} is $\mathcal{O}(T(JN + q + \max(\tau_t^i)q + q))$.

\vspace{-0.1in}
\section{Convergence Analysis}
\label{sec: conv}
In this section, we formally analyze the learning performance of our cross-layer design.
We start by introducing standard assumptions of stochastic optimization. 
Using these, we intuitively explain how wireless network properties affect learning performance by analyzing convergence upper bound components. 
We illustrate the convergence of Algorithm~\ref{alg:apaf}, noting that Algorithm~\ref{alg:roar-fed} can be regarded as a special case.

We consider a series of general non-convex local loss function with three standard assumptions:
\begin{assum}($L$-Smoothness) \label{a_smooth}
	 For device $i \in [m]$, the gradient of its loss function meets the requirement $ \| \nabla F_i(\w_1) - \nabla F_i(\w_2) \| \leq L \| \w_1 - \w_2 \|$, $\forall \w_1, \w_2 \in \mathbb{R}^q$.
\end{assum}
\begin{assum}(Unbiased Local Stochastic Gradients with Bounded Variance) \label{a_unbias}
	The stochastic local gradient is both unbiased and has a bounded variance, i.e.,
	$\mathbb{E} [\nabla F_i(\w, \xi_i)] = \nabla F_i(\w)$, and $\mathbb{E} [\| \nabla F_i(\w, \xi_i) -  \nabla F_i(\w) \|^2] \leq \sigma^2$, $\forall i \in [m]$, where $\xi_i$ is a random sample in $D_i$.
\end{assum}
\begin{assum}(Bounded Stochastic Gradient) \label{a_bounded}
	The local stochastic gradients have bounded norms, i.e., $\mathbb{E} [\| \nabla F_i(\w, \xi_i) \|^2] \leq G^2$, $\forall i \in [m]$.
\end{assum}
We provide the convergence analysis in Theorem~\ref{thm:convergence}:
\begin{restatable}[Convergence Analysis] {theorem} {convergence} \label{thm:convergence}
    Let the learning rate be a constant, i.e., $\eta_t = \eta  \leq \frac{1}{L}$. Set $P_t^i=P_i, \forall t \in [T]$. Under Assumptions~\ref{a_smooth}-~\ref{a_bounded}, we have:
    \begin{multline}
        \min_{t \in [T]} \mb{E} \| \nabla F(\w_t) \|^2 \leq \underbrace{\frac{2 \left(F(\w_0) - F(\w_{*}) \right)}{T \eta}}_{\mathrm{optimization \, error}} + \underbrace{\frac{L \sigma_{c,u}^2}{ \eta \beta^2}}_{\substack{\mathrm{channel \,
        noise} \\ \mathrm{error}}} \nonumber \\
        + \underbrace{ \frac{2 m L^2}{9 \eta G^2}  \sum_{i=1}^m  \frac{(\alpha_i)^2 P_i^2}{(\beta_i^2)} }_{\mathrm{local \, update \, error}} + \underbrace{  L \eta \sigma^2 \frac{1}{T} \sum_{t=0}^{T-1} \sum_{i=1}^{m} \alpha_i^2 \mb{E}_t  \bigg\| \frac{h_t^{u,i}}{\hh_t^{u,i}} \bigg\|^2}_{\mathrm{statistical \, error}}  \nonumber \\
          + \underbrace{2 m G^2  \frac{1}{T} \sum_{t=0}^{T-1} \sum_{i=1}^m (\alpha_i)^2 \mb{E}_t  \bigg\| 1 - \frac{h_t^{u,i}}{\hh_t^{u,i}} \bigg\|^2}_{\mathrm{uplink \, channel \, estimation \, error}} \nonumber \\
        + \underbrace{2 m L^2 \frac{1}{T} \sum_{t=0}^{T-1} \sum_{i=1}^m (\alpha_i)^2 \left( \mb{E}_t \bigg\|1 - \frac{h_{t}^{d,i}}{\hh_{t}^{d,i}} \bigg\|^2 V(t) \right) }_{\mathrm{downlink \, channel \, estimation \, error}} \nonumber \\
        + \underbrace{2 m L^2 \frac{1}{T} \sum_{t=0}^{T-1} \sum_{i=1}^m(\alpha_i)^2\frac{\sigma_{c,d}^2}{\| \hh_{t}^{d,i}\|^2 (\beta_t^d)^2}}_{\mathrm{downlink \, channel \, noise \, error}} \nonumber \\
    \end{multline}
    \vspace{-5pt}
    where $\frac{1}{\beta_i^2} = \frac{1}{T} \sum_{t=0}^{T-1} \frac{1}{(\beta_t^i)^2}$, $\frac{1}{\bar{\beta}^2} = \frac{1}{T} \sum_{t=0}^{T-1} \frac{1}{(\beta_t^u)^2}$, and
    \begin{equation*}
        V(t) \triangleq 2 \|\w_0\|^2 + \sum_{l=0}^{t-1} \frac{\sigma_{c,u}^2}{(\beta_l^u)^2} + 2 t m G^2 \eta^2\sum_{l=0}^{t-1}  \sum_{i=1}^m \alpha_i^2 \left( \frac{h_l^{u,i}}{\hh_l^{u,i}}\right)^2.
    \end{equation*}
\end{restatable}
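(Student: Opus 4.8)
The plan is to treat the global update \eqref{equ:globalup} as a single perturbed gradient step and then invoke the $L$-smoothness descent inequality. First I would substitute the uplink power-control factor \eqref{equ:betaiimp} and the local SGD recursion \eqref{equ:sgd} into \eqref{equ:globalup}. Writing $\bar{\g}_t^i = \frac{1}{\tau_t^i}\sum_{k=0}^{\tau_t^i-1}\nabla F_i(\w_{t,k}^i,\xi_{t,k}^i)$ for the per-client averaged stochastic gradient, the $\beta_t^i$ and $\beta_t^u$ factors cancel against $\tau_t^i$ and $h_t^{u,i}$, and the recursion collapses to $\w_{t+1} = \w_t - \eta\sum_{i\in[m]}\alpha_i\frac{h_t^{u,i}}{\hh_t^{u,i}}\bar{\g}_t^i + \tilde{\z}_t^u$, i.e. a noisy, channel-distorted, multi-step version of gradient descent. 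The factor $h_t^{u,i}/\hh_t^{u,i}$ is exactly the quantity whose deviation from $1$ produces the uplink channel-estimation error, so keeping it explicit throughout is essential.

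With this recursion in hand, I would apply Assumption~\ref{a_smooth} to get $F(\w_{t+1}) \le F(\w_t) + \langle \nabla F(\w_t), \w_{t+1}-\w_t\rangle + \frac{L}{2}\|\w_{t+1}-\w_t\|^2$ and take the conditional expectation $\Eb_t$ over the round-$t$ randomness (SGD sampling, uplink noise, downlink noise). The zero-mean independent uplink noise $\tilde{\z}_t^u$ drops out of the inner product and contributes only its variance $\sigma_{c,u}^2/(\beta_t^u)^2$ to the quadratic term, giving the channel-noise error. The core work is the inner-product term: I would decompose the aggregated direction as $\nabla F(\w_t)$ plus corrections, by inserting and subtracting $\sum_i\alpha_i\bar{\g}_t^i$, then $\sum_i\alpha_i\nabla F_i(\w_{t,0}^i)$, and finally $\sum_i\alpha_i\nabla F_i(\w_t)$. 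This separates four sources of error: (i) the uplink channel ratio $(1-h_t^{u,i}/\hh_t^{u,i})$ multiplying bounded gradients, which with Assumption~\ref{a_bounded} yields the uplink channel-estimation term; (ii) the unbiased SGD noise, which after taking expectations and using Assumption~\ref{a_unbias} produces the statistical error; (iii) the within-round local drift $\w_{t,k}^i-\w_{t,0}^i$; and (iv) the downlink starting-point mismatch $\w_{t,0}^i-\w_t$. I would control the cross terms with Young's inequality so that $-\frac{\eta}{2}\|\nabla F(\w_t)\|^2$ survives with a usable coefficient.

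For the local-drift piece (iii) I would use $\|\w_{t,k}^i-\w_{t,0}^i\|\le \eta k G$ from \eqref{equ:sgd} and Assumption~\ref{a_bounded}, convert to gradient differences through $L$-smoothness, and then invoke the design constraint \eqref{inequ:betai}, which bounds $\tau_t^i \le P_i/(3\eta^2\beta_t^i G^2)$; squaring and averaging $1/(\beta_t^i)^2$ over $t$ produces the $\alpha_i^2 P_i^2/\beta_i^2$ structure of the local-update error. For the downlink piece (iv), substituting \eqref{equ:wt0i} gives $\w_{t,0}^i-\w_t = (\frac{h_t^{d,i}}{\hh_t^{d,i}}-1)\w_t + \tilde{\z}_t^{d,i}$; squaring and passing through $L$-smoothness splits this into a downlink channel-estimation part proportional to $\|1-h_t^{d,i}/\hh_t^{d,i}\|^2\|\w_t\|^2$ and a downlink channel-noise part proportional to $\sigma_{c,d}^2/(\|\hh_t^{d,i}\|^2(\beta_t^d)^2)$, matching the last two terms.

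The main obstacle I anticipate is that the downlink estimation error carries a factor $\|\w_t\|^2$, which cannot be treated as a constant because the objective is non-convex and no bounded-domain assumption is imposed. To close the argument I would prove an auxiliary recursive bound $\Eb\|\w_t\|^2 \le V(t)$ directly from the collapsed recursion: expanding $\|\w_{t+1}\|^2$, discarding the zero-mean noise cross terms, and bounding the gradient contribution by $G^2$ via Assumption~\ref{a_bounded} yields precisely the accumulation defining $V(t)$ (the initial $\|\w_0\|^2$, the summed uplink noise variances, and the summed squared channel-ratio gradient terms). Substituting $V(t)$ for $\|\w_t\|^2$ turns (iv) into the stated downlink channel-estimation term. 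Finally I would telescope the per-round descent inequality over $t=0,\dots,T-1$, divide by $T\eta/2$, and use $\min_t\Eb\|\nabla F(\w_t)\|^2 \le \frac{1}{T}\sum_t\Eb\|\nabla F(\w_t)\|^2$ to assemble the bound. The most delicate bookkeeping is ensuring the nested error sources — downlink noise feeding the start point, which feeds local training, which feeds the uplink aggregation — are separated cleanly so that the cross terms vanish in conditional expectation and no error is double-counted.
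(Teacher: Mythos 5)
Your proposal follows essentially the same route as the paper's proof: collapsing \eqref{equ:globalup} with \eqref{equ:betaiimp} into a channel-distorted SGD step, applying the $L$-smoothness descent inequality with conditional expectation, splitting the error into SGD variance, uplink ratio deviation, local drift (controlled via \eqref{inequ:betai}), and downlink mismatch, bounding $\mb{E}\|\w_t\|^2$ by $V(t)$, and telescoping. The only cosmetic differences are that the paper uses the polarization identity where you invoke Young's inequality, and it unrolls $\w_t$ from $\w_0$ in one pass rather than via a recursion, both of which are equivalent.
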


\begin{proof}[Proof] 
See Appendix A.
\end{proof}
\vspace{-0.1in}
The convergence upper bound is affected by seven resources of error, as shown in Theorem~\ref{thm:convergence}: 1) FL optimization error; 2) uplink noise; 3) local update error; 4) statistical error; 5) uplink channel estimation error; 6) downlink channel estimation error; 7) downlink noise.
In the error-free downlink scenario, the last two terms disappear. 
These errors are interconnected due to the cross-layer design, and with only imperfect CSI at edge devices, both uplink and downlink estimation errors impact convergence.
With perfect CSI, two estimation errors will vanish, consistent with the literature that fading channels can be fully mitigated by power control.
Furthermore, with the assistance of RIS, our algorithms can achieve excellent learning performance even when the direct links are weak.

We analyze the term $\frac{h_t^i}{\hat{h_t^i}}$ to further bound the statistical and channel estimation errors in Theorem~\ref{thm:convergence}.
Using the Taylor expansion from~\cite{zhu2020one}, we approximate it: $\frac{h_t^i}{\hh_t^i} = \frac{1}{1 + \frac{\Delta_t^i}{h_t^i}}  = 1 - \frac{\Delta_t^i}{h_t^i} + \mc{O}( (\frac{\Delta_t^i}{h_t^i})^2)$. 
Neglecting higher-order terms and assuming i.i.d. CSI estimation errors and unit norm RIS phase elements, we obtain the following result:
\begin{restatable}{corollary} {convergence_rate} \label{cor:convergence}
Let $|\Delta_t| \ll |h_t|, \forall t\in[T]$, $h_{UB,m} = \mathop{min}\limits_{t \in [T], i \in[m]}\{|h_{UB,t}^{i}|\}$, $h_{UR,a} = \mathop{max}\limits_{t \in [T], i \in[m], j \in [N]}\{|h_{UR,t,j}^{i}|\}$, $h_{RB,a} = \mathop{max}\limits_{t \in [T], j \in [N]}\{|h_{RB,t,j}|\}$, $g_m=\mathop{min}\limits_{t \in [T], i \in[m], j \in [N]}\{|g_{t,j}^i|\}$ for both downlink and uplink, the convergence rate of Algorithm~\ref{alg:apaf} is bounded. The statistical error and channel estimation errors are bounded by:
\begin{align*}
        & L \eta \sigma^2 \frac{1}{T} \sum_{t=0}^{T-1} \sum_{i=1}^{m} \alpha_i^2 \mb{E}_t  \bigg\| \frac{h_t^{u,i}}{\hh_t^{u,i}} \bigg\|^2  \leq 
        L \eta \sigma^2 \sum_{i=1}^{m} \alpha_i^2 \left( 1 + C\right), \nonumber \\
        & 2 m G^2  \frac{1}{T} \sum_{t=0}^{T-1} \sum_{i=1}^m (\alpha_i)^2 \mb{E}_t  \bigg\| 1 - \frac{h_t^{u,i}}{\hh_t^{u,i}} \bigg\|^2  \leq 2 m G^2 \sum_{i=1}^m (\alpha_i)^2 C, \nonumber \\
        & 2 m L^2 \frac{1}{T} \sum_{t=0}^{T-1} \sum_{i=1}^m (\alpha_i)^2 \left( \mb{E}_t \bigg\|1 - \frac{h_{t}^{d,i}}{\hh_{t}^{d,i}} \bigg\|^2 V(t) \right) \nonumber \\
        & \quad \quad \quad \quad \quad \quad \quad \quad \leq 2 m L^2 \sum_{i=1}^m(\alpha_i)^2CV(T), 
\end{align*}
where $C=\frac{\sigt_h^2 (1+N^2(h_{UR,a}^2+h_{RB,a}^2+\sigt_h^2))}{(h_{UB,m})^2}$. 
\end{restatable}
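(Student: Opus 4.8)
The plan is to start from the first-order approximation quoted just above the corollary, namely $\frac{h_t^i}{\hh_t^i} \approx 1 - \frac{\Delta_t^i}{h_t^i}$ with $\Delta_t^i$ the effective channel estimation error, and reduce all three target quantities to the single scalar $\mb{E}_t[|\Delta_t^i|^2]/|h_t^i|^2$. For the two ``estimation error'' terms I would write $\mb{E}_t\|1 - \frac{h_t^i}{\hh_t^i}\|^2 \approx \mb{E}_t|\frac{\Delta_t^i}{h_t^i}|^2$ after dropping the $\mc{O}((\Delta_t^i/h_t^i)^2)$ remainder. For the statistical error term I would expand $|1 - \frac{\Delta_t^i}{h_t^i}|^2 = 1 - 2\,\mathrm{Re}(\frac{\Delta_t^i}{h_t^i}) + |\frac{\Delta_t^i}{h_t^i}|^2$ and invoke $\mb{E}_t[\Delta_t^i]=0$ to kill the cross term, leaving $\mb{E}_t\|\frac{h_t^i}{\hh_t^i}\|^2 \approx 1 + \mb{E}_t|\frac{\Delta_t^i}{h_t^i}|^2$. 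Thus everything hinges on showing $\mb{E}_t|\frac{\Delta_t^i}{h_t^i}|^2 \leq C$ uniformly in $t,i$.

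The heart of the argument is bounding $\mb{E}_t[|\Delta_t^i|^2]$ for the \emph{effective} (cascaded) channel $h_t^i = h_{UB,t}^i + (\h_{UR,t}^i)^H \The_t \h_{RB,t}$, whose estimate carries an independent error on every constituent path. I would split $\Delta_t^i$ into the direct-link error (variance $\sigt_h^2$) and the RIS-link error, a sum over the $N$ elements of the per-element product errors. Substituting $\hh_{UR,n}=h_{UR,n}+\Delta_{UR,n}$ and $\hh_{RB,n}=h_{RB,n}+\Delta_{RB,n}$ into each product $\hh_{UR,n}^*\theta_{n,t}\hh_{RB,n}$ and using $|\theta_{n,t}|=1$, the per-element error is $h_{UR,n}^*\theta_{n,t}\Delta_{RB,n} + \Delta_{UR,n}^*\theta_{n,t}h_{RB,n} + \Delta_{UR,n}^*\theta_{n,t}\Delta_{RB,n}$; taking expectation over independent zero-mean errors (so the cross terms drop) and bounding the fixed magnitudes by $h_{UR,a},h_{RB,a}$ gives a per-element second moment of at most $(h_{UR,a}^2 + h_{RB,a}^2 + \sigt_h^2)\sigt_h^2$. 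Applying Cauchy--Schwarz to the sum of $N$ such terms, $|\sum_n (\cdot)|^2 \leq N\sum_n |(\cdot)|^2$, and then summing the $N$ equally-bounded summands, the factor $N$ from Cauchy--Schwarz times the $N$ terms in the sum produces the $N^2$ scaling, yielding $\mb{E}_t[|\Delta_t^i|^2] \leq \sigt_h^2(1 + N^2(h_{UR,a}^2 + h_{RB,a}^2 + \sigt_h^2))$, exactly the numerator of $C$.

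For the denominator I would lower-bound $|h_t^i|^2 \geq (h_{UB,m})^2$ via the worst-case direct-link magnitude (the RIS phase design keeps the cascaded term from destructively cancelling the direct link, so this bound is benign), which combined with the previous step gives $\mb{E}_t|\frac{\Delta_t^i}{h_t^i}|^2 \leq C$ for both uplink and downlink. Substituting the two resulting inequalities into the three error terms of Theorem~\ref{thm:convergence} and noting that $\frac{1}{T}\sum_{t=0}^{T-1}$ of a $t$-independent bound collapses to the bound itself yields the first two claims directly. For the downlink claim I would additionally use that $V(t)$ is nondecreasing in $t$ — it is a cumulative sum of nonnegative noise and gradient contributions — so $V(t) \leq V(T)$ for all $t \leq T-1$, letting me factor $V(T)$ out of the sum. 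I expect the main obstacle to be the error-propagation bookkeeping in the second paragraph: tracking which cross terms vanish under the zero-mean, mutual-independence assumptions on the per-path estimation errors, justifying the (deliberately loose) Cauchy--Schwarz step responsible for the $N^2$ scaling, and cleanly arguing the $|h_t^i|^2 \geq (h_{UB,m})^2$ lower bound even though $h_t^i$ is a sum of terms.
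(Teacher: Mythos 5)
Your proposal is correct and follows essentially the same route as the paper's proof: the first-order Taylor expansion with the zero-mean cross term eliminated, independence of per-path errors to kill cross terms, a Cauchy--Schwarz step that (combined with summing $N$ per-element variances bounded by $\sigt_h^2(h_{UR,a}^2+h_{RB,a}^2+\sigt_h^2)$) produces the $N^2$ factor, the heuristic lower bound $|h_t^i|^2 \geq (h_{UB,m})^2$ for the denominator, and substitution back into Theorem~\ref{thm:convergence}. The only differences are presentational --- you apply Cauchy--Schwarz to the scalar sum over elements where the paper applies it to the inner product with the phase vector $\theb_t$ (using $\|\theb_t\|^2=N$), and you make explicit the monotonicity $V(t)\leq V(T)$ that the paper leaves implicit.
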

\begin{proof}[Proof] 
See Appendix B.
\end{proof}

The convergence bound is impacted by the number of RIS elements $N$, as shown in Corollary~\ref{cor:convergence}. 
While increasing $N$ raises the channel estimation error due to more links, this error is not dominant.
More RIS elements improve channel conditions, reducing local update and downlink noise errors, and ultimately enhancing the overall learning performance.
Specifically, as demonstrated in~\eqref{inequ:risimp} within Appendix~\ref{sec:profcoro}, an increase in $N$ increases the channel coefficient, leading to a better convergence bound, as demonstrated in Section~\ref{sec: exp}.

\vspace{-0.4in}
\section{Extension to Personalization}
\label{sec: extension}
\vspace{-0.2in}
In this section, we extend our framework to personalized federated learning with personal RIS. 
Specifically, we leverage the potential of personal RIS in enhancing OTA-FL within 6G programmable wireless environments.
We introduce our cross-layer approach to such distributed personal RIS model to assist PFL for individual clients, the proposed framework is summarized in Algorithm~\ref{alg:PROR-fed}.

We consider a personal RIS-assisted communication system, as shown in Fig.~\ref{fig:sysmodel_ext}. 
Now, each client is associated with a dedicated RIS, deployed nearby and exclusively controlled by the client itself.
We investigate a bi-level optimization for PFL:
\begin{equation}
    \begin{aligned} 
        & \min_{\mathbf{v}^i \in \mathbb{R}^q}R_i(\mathbf{v}^i;\w^*) \triangleq F_i(\mathbf{v}^i, D_i) + \frac{\lambda}{2} \|\mathbf{v}^i - \w^*\|^2  \\
        &\begin{array}{ll}
        s.t. & \w^* = \underset{\w}{\mathrm{arg\,min}} F(\w),
        \end{array}
    \end{aligned}
    \label{equ: localobjective}
\end{equation}
where $\mathbf{v}^i$ is personalized local model parameter, and $\lambda$ is the hyperparameter for regularization~\cite{li2021ditto}. 
We assume an error-free downlink in this extension, noting that the noisy downlink can be handled by the same power control as in Sec.~\ref{subsec:downpc}. 
Notably, each user updates their own RIS, eliminating the need for user selection as in single RIS setups.

\begin{figure}[t] 
    \centering
    \includegraphics[scale=0.25]{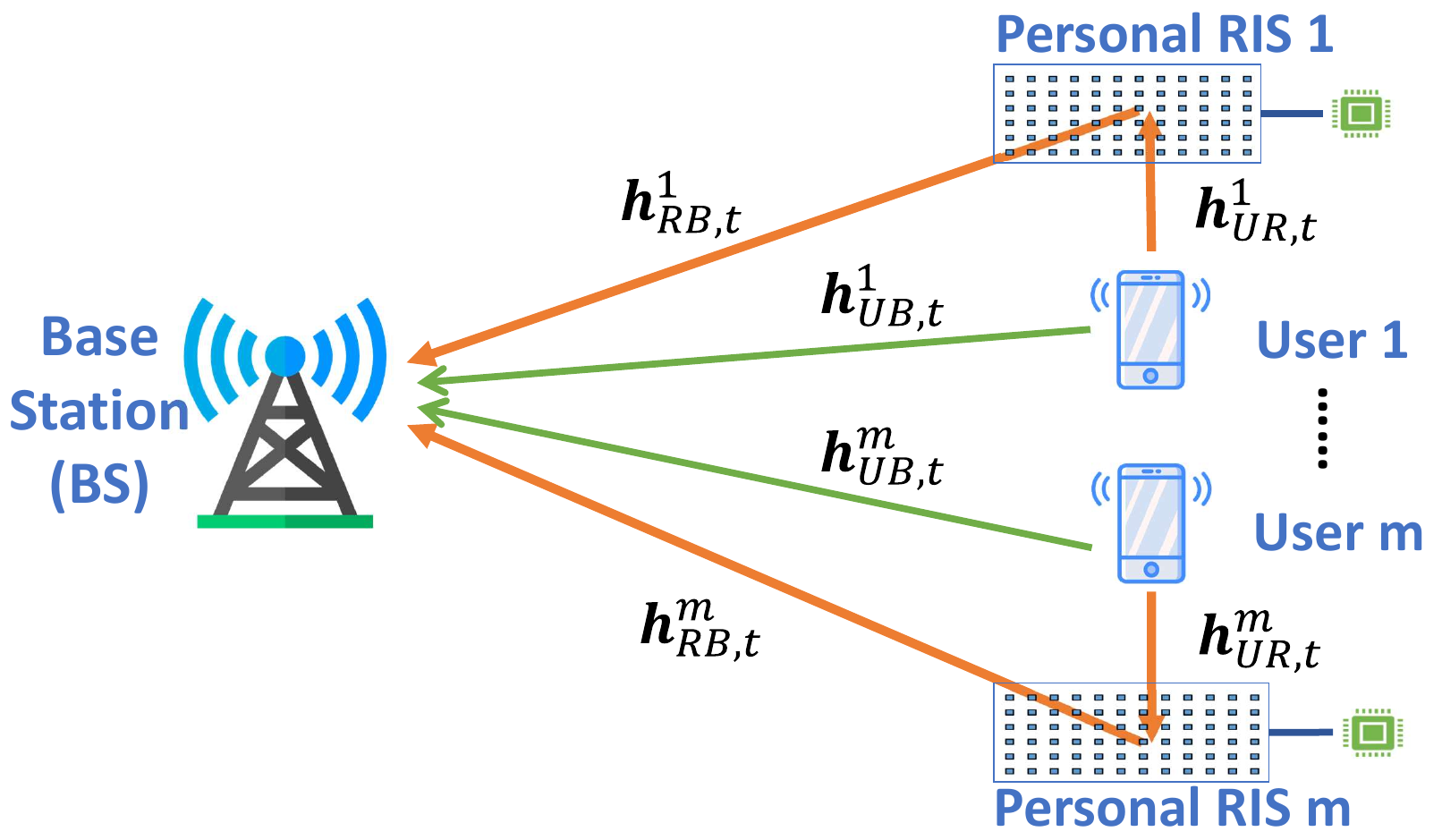}
    \vspace{-2pt}
    \caption{The personal RIS-assisted communication system.}
    \label{fig:sysmodel_ext}
    \vspace{-0.05in}
\end{figure}

\begin{algorithm}[t!] 
    \caption{Personal RIS-assisted Over-the-Air Resource Allocation for Personalized Federated Learning~(\algp)} \label{alg:PROR-fed} 
    \begin{algorithmic}[1]
    \STATE 
    \emph{\bf Initialization: $\theb_0^i$, $\w_0$, $\mathbf{v}_0^i$, $\beta_t^i$, $\tau_v^i$, $\tau_t^i,i \in [m]$.}
    \FOR{$t=0, \dots, T-1$}
        \FOR{each user $i \in [m]$ in parallel}    \STATE {User $i$ updates its own RIS iteratively by~\eqref{equ:phase}}.
        \ENDFOR
        \STATE {PS broadcasts $\w_t$ to all users.}
        \FOR{each user $i \in [m]$ in parallel}
        \STATE {Each user obtains $\tau_t^i$ for power constraint, conducts local training, calculates $\beta_t^i$ by~\eqref{equ:betaiimp} and transmits $\x_t^i$.}
        \STATE {Each user performs $\tau_v^i$ local steps to update $\mathbf{v}_t^i$.}    
        \ENDFOR
    \STATE {The PS aggregates and updates global model by~\eqref{equ:globalup}.}
    \ENDFOR
    \STATE {\textbf{return} $\{\mathbf{v}^i\}_{i \in [m]} (personalized), \w_T (global)$}
    \end{algorithmic}
\end{algorithm}
\textbf{Convergence of Algorithm~\ref{alg:PROR-fed}.} The global model $\w$ is independent of the personalized models $\{\mathbf{v}^i\}_{i \in [m]}$. Therefore, the convergence rate of the global optimization is the same as \alg.
\vspace{-5pt}
\begin{assum}\label{a_unbias_local}
    (Unbiased Local Stochastic Gradients)
    $\mathbb{E} [g_i(\mathbf{v}^i;\w)] = \nabla R_i(\mathbf{v}^i;\w)$, where $g_i(\mathbf{v}^i;\w)$ is stochastic gradient of $R_i(\mathbf{v}^i;\w)$.
\end{assum}
\vspace{-5pt}
Given these observations, we are now able to provide an analysis of the convergence for personalized local tasks.
\vspace{-5pt}
\begin{restatable} {theorem} {convergence} \label{thm:conv_per}
    With Assumptions~\ref{a_smooth}-~\ref{a_unbias_local}, a constant global learning rate $\eta_t = \eta  \leq \frac{1}{L}$, a constant local learning rate $\eta_v \leq \frac{1}{\sqrt{2L^2 + 2 \lambda^2}}$ and $T\geq 4$, for each device $i\in [m]$, we have:
    \vspace{-10pt}
    \begin{multline}
        \min_{t \in [T]} \mb{E} \| \nabla R_i(\mathbf{v}_t^i;\w^{*}) \|^2 \leq 
        \underbrace{\sqrt{2L^2 + 2 \lambda^2} \eta_v^2 \sigma^2}_{\mathrm{statistical \, error}}  \nonumber \\
        +\underbrace{ \frac{1}{T} \sum_{s=0}^{T-1}\frac{2 \left(R_i(\mathbf{v}_s^i;\w_s) - R_i(\mathbf{v}_{s+1}^{i};\w_s) \right)}{\tau_v(\eta_v - \frac{\sqrt{2L^2 + 2 \lambda^2}}{2}\eta_v^2)} }_{\mathrm{optimization \, error}}   \underbrace{+ \lambda^2 T^2 \frac{\sigma_{c,u}^2}{\beta^2}}_{\mathrm{channel \, noise \, error}} \nonumber \\
        + \underbrace{2 \lambda^2 m G^2 \eta^2\frac{1}{T} \sum_{s=0}^{T-1}  (T-s) \sum_{l=s}^T \sum_{i=1}^m \alpha_i^2 \mb{E}_t \bigg\| \frac{h_l^{u,i}}{\hh_l^{u,i}} \bigg\|^2}_{\mathrm{global \, model \, update \, error}} \nonumber
    \end{multline}
    where $\frac{1}{\bar{\beta}^2} = \frac{1}{T} \sum_{t=0}^{T-1} \frac{1}{(\beta_t^u)^2}$, $\tau_v = \tau_v^i$.
\end{restatable}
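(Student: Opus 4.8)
The plan is to treat the personalized update at each global round $s$ as an inner SGD loop of $\tau_v$ steps on the regularized objective $R_i(\cdot;\w_s)$, and then to transfer the resulting stationarity guarantee from the moving anchor $\w_s$ to the true anchor $\w^*$. First I would establish the effective smoothness of the personalized objective: since $\nabla R_i(\mathbf{v};\w) = \nabla F_i(\mathbf{v}) + \lambda(\mathbf{v}-\w)$, combining the $L$-smoothness of $F_i$ (Assumption~\ref{a_smooth}) with the $\lambda$-Lipschitz regularizer gradient and $\|a+b\|^2 \le 2\|a\|^2 + 2\|b\|^2$ yields $\|\nabla R_i(\mathbf{v}_1;\w) - \nabla R_i(\mathbf{v}_2;\w)\| \le \sqrt{2L^2+2\lambda^2}\,\|\mathbf{v}_1-\mathbf{v}_2\|$. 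Hence $R_i$ is $L_R$-smooth with $L_R \triangleq \sqrt{2L^2+2\lambda^2}$, which is exactly the constant appearing in the statement and explains the constraint $\eta_v \le 1/L_R$. I would also note that the stochasticity of $g_i(\mathbf{v};\w)$ enters only through $\nabla F_i$ (the regularizer is deterministic given $\w$), so by Assumption~\ref{a_unbias_local} the stochastic gradient is unbiased and, as in Assumption~\ref{a_unbias}, satisfies $\mathbb{E}\|g_i\|^2 \le \|\nabla R_i\|^2 + \sigma^2$.

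Next I would apply the standard non-convex descent lemma across the $\tau_v$ local steps of round $s$. Using $\mathbf{v}^i_{s,k+1} = \mathbf{v}^i_{s,k} - \eta_v g_i(\mathbf{v}^i_{s,k};\w_s)$ together with $L_R$-smoothness and the variance bound gives, per step, $(\eta_v - \tfrac{L_R}{2}\eta_v^2)\,\mathbb{E}\|\nabla R_i(\mathbf{v}^i_{s,k};\w_s)\|^2 \le R_i(\mathbf{v}^i_{s,k};\w_s) - \mathbb{E}R_i(\mathbf{v}^i_{s,k+1};\w_s) + \tfrac{L_R}{2}\eta_v^2\sigma^2$. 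Summing over $k=0,\dots,\tau_v-1$ telescopes the objective to $R_i(\mathbf{v}_s^i;\w_s) - R_i(\mathbf{v}_{s+1}^i;\w_s)$; dividing by $\tau_v(\eta_v - \tfrac{L_R}{2}\eta_v^2)$ (positive because $\eta_v \le 1/L_R$) and averaging over rounds produces the \emph{optimization error} and \emph{statistical error} terms. Crucially the objective difference keeps the \emph{same} anchor $\w_s$ in both arguments, which is why it does not telescope across rounds: the anchor shifts from $\w_s$ to $\w_{s+1}$ between rounds.

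The remaining work is to replace the current-anchor gradient by the $\w^*$-anchored gradient we actually want to bound. Since $\nabla R_i(\mathbf{v};\w^*) - \nabla R_i(\mathbf{v};\w_s) = \lambda(\w_s - \w^*)$, I would invoke $\|\nabla R_i(\mathbf{v};\w^*)\|^2 \le 2\|\nabla R_i(\mathbf{v};\w_s)\|^2 + 2\lambda^2\|\w_s - \w^*\|^2$, which injects the factor $\lambda^2$ and the global-model drift $\|\w_s - \w^*\|^2$ into the bound. To control the drift I would expand the global recursion~\eqref{equ:globalup}: substituting the channel-inversion design~\eqref{equ:betaiimp} and the transmit signal~\eqref{equ:transig} writes each increment as $\w_{l+1}-\w_l = -\eta\sum_i \tfrac{h_l^{u,i}}{\hh_l^{u,i}}\tfrac{\alpha_i}{\tau_l^i}\sum_k \nabla F_i(\w^i_{l,k},\xi^i_{l,k}) + \tilde{\z}_l^u$. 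Expressing $\w^* - \w_s$ through the increments accumulated over the remaining horizon $[s,T]$ and applying Cauchy--Schwarz over the $(T-s)$ summands gives $\|\w_s-\w^*\|^2 \le (T-s)\sum_{l=s}^{T}\|\w_{l+1}-\w_l\|^2$; bounding each increment with the bounded-gradient Assumption~\ref{a_bounded} and the scaled-noise variance $\sigma_{c,u}^2/(\beta_l^u)^2$ separates into the \emph{global model update error} (the $2\lambda^2 mG^2\eta^2$ nested double sum) and, after crude summation, the \emph{channel noise error} $\lambda^2 T^2 \sigma_{c,u}^2/\beta^2$ with its $T^2$ dependence. Averaging over $s$ and taking the minimum over $t$ then assembles the full bound.

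I expect the drift control in the third step to be the main obstacle. The difficulty is twofold: the regularizer anchor is a noisy, moving target whose trajectory is governed by the global recursion (itself only guaranteed to converge through Theorem~\ref{thm:convergence} under communication errors), and relating $\w_s$ to the reference $\w^*$ over a finite horizon forces the $(T-s)$ Cauchy--Schwarz expansion responsible for the $T^2$ factor. Carrying the imperfect-CSI ratios $h_l^{u,i}/\hh_l^{u,i}$ through the increment bound without losing the coupling to the uplink power-control design, and keeping the nested double-sum bookkeeping consistent, is where the care is needed; by contrast, the inner-loop descent and the smoothness computation are comparatively routine.
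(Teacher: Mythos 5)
Your proposal is correct and follows essentially the same route as the paper's proof: the $\sqrt{2L^2+2\lambda^2}$-smoothness computation, the inner-loop descent lemma telescoped over $\tau_v$ steps, the anchor-transfer inequality $\|\nabla R_i(\mathbf{v};\w^*)\|^2 \le 2\|\nabla R_i(\mathbf{v};\w_s)\|^2 + 2\lambda^2\|\w_s-\w^*\|^2$, and the drift bound on $\|\w_s-\w^*\|^2$ via the global recursion with Cauchy--Schwarz over the $(T-s)$ remaining increments (identifying $\w^*$ with the end-of-horizon iterate $\w_T$, exactly as the paper does). Your "crude summation" for the channel-noise term is precisely the paper's bound $\frac{2}{T}\sum_{s=0}^{T-1}(T-s)^2 \le T^2$, which is where the hypothesis $T \ge 4$ is used.
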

\begin{proof}[Proof] 
See Appendix C.
\end{proof}
\vspace{-10pt}
Using the same analytical method of $\frac{h_l^{u,i}}{\hh_l^{u,i}}$, we derive Corollary~\ref{cor:convergence}, showing that the convergence bound is finite.
\begin{restatable}{corollary} {convergence_rate} \label{cor:conv_per}
Let $|\Delta_t| \ll |h_t|, \forall t\in[T]$, $h_{UB,m} = \\
\mathop{min}\limits_{t \in [T], i \in[m]}\{|h_{UB,t}^i|\}$, $h_{UR,a} = \mathop{max}\limits_{t \in [T], i \in[m], j \in [N]}\{|h_{UB,t,j}^i|\}$, $h_{RB,a} = \mathop{max}\limits_{t \in [T], j \in [N]}\{|h_{RB,t,j}|\}$, $\eta = \frac{1}{T}$, $\beta = T$, $\tau_v^i = \tau_v$, $\alpha_i=\frac{1}{m}$, $\exists \lambda < \epsilon, \epsilon>0,$ the convergence rate is bounded:
\begin{align}
        &\min_{t \in [T]} \mb{E} \| \nabla R_i(\mathbf{v}_t^i;\w^{*}) \|^2 \leq \sqrt{2L^2 + 2\lambda^2} \eta_v^2 \sigma^2 + 2 \lambda^2 G^2(1+C)
         \nonumber \\
        &+ \frac{1}{T} \sum_{s=0}^{T-1}\frac{2 \left(R_i(\mathbf{v}_s^i;\w_s) - R_i(\mathbf{v}_{s+1}^{i};\w_s) \right)}{\tau_v(\eta_v - \frac{\sqrt{2L^2 + 2 \lambda^2}}{2}\eta_v^2)}  + \lambda^2 \sigma_{c,u}^2, \nonumber
\end{align}
where $C=\frac{\sigt_h^2 (1+N^2(h_{UR,a}^2+h_{RB,a}^2+\sigt_h^2))}{(h_{UB,m})^2}$. 
\end{restatable}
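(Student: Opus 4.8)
The plan is to start from the four-term upper bound of Theorem~\ref{thm:conv_per} and substitute the prescribed parameter choices ($\eta=\frac{1}{T}$, $\beta=T$, $\alpha_i=\frac{1}{m}$, $\tau_v^i=\tau_v$) term by term, using a single auxiliary estimate on the channel-mismatch ratio $\frac{h_l^{u,i}}{\hh_l^{u,i}}$ to control the only data-dependent summation. Since the statistical error $\sqrt{2L^2+2\lambda^2}\eta_v^2\sigma^2$ and the optimization error already appear in exactly the claimed form (the latter depending only on $\tau_v$, which matches the choice $\tau_v^i=\tau_v$), these two terms carry over verbatim and require no work. For the channel noise error, I would simply insert $\beta=T$ into $\lambda^2 T^2\frac{\sigma_{c,u}^2}{\beta^2}$, so the $T^2$ factors cancel and leave $\lambda^2\sigma_{c,u}^2$, which is the final term of the corollary.

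The only substantive term is the global model update error $2\lambda^2 m G^2\eta^2\frac{1}{T}\sum_{s=0}^{T-1}(T-s)\sum_{l=s}^T\sum_{i=1}^m\alpha_i^2\,\mb{E}_t\big\|\frac{h_l^{u,i}}{\hh_l^{u,i}}\big\|^2$. Here I would reuse the Taylor-expansion argument already established for Corollary~\ref{cor:convergence} (Appendix~\ref{sec:profcoro}): under $|\Delta_l|\ll|h_l|$, i.i.d.\ estimation errors, and unit-modulus RIS elements, the effective cascaded channel $h_l^{u,i}=h_{UB,l}^{u,i}+(\g_l^{u,i})^H\theb_l$ yields $\mb{E}_t\big\|\frac{h_l^{u,i}}{\hh_l^{u,i}}\big\|^2\le 1+C$ with $C=\frac{\sigt_h^2(1+N^2(h_{UR,a}^2+h_{RB,a}^2+\sigt_h^2))}{(h_{UB,m})^2}$, where the $N^2$ dependence arises from aggregating the $N$ cascaded reflect paths and the defined extrema $h_{UR,a}$, $h_{RB,a}$, $h_{UB,m}$ bound the individual links. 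With this estimate pulled out of the summation, the remaining task is arithmetic: set $\alpha_i=\frac{1}{m}$ so that $\sum_{i=1}^m\alpha_i^2=\frac{1}{m}$, collapse the inner count $\sum_{l=s}^T 1=T-s+1$, and bound the resulting double sum by $\sum_{s=0}^{T-1}(T-s)(T-s+1)=\sum_{j=1}^T j(j+1)=\frac{T(T+1)(T+2)}{3}\le T^3$ (valid since $T\ge 4$). Combining the prefactors, $2\lambda^2 m G^2\eta^2\frac{1}{T}\cdot\frac{1}{m}=\frac{2\lambda^2 G^2}{T^3}$ after inserting $\eta^2=\frac{1}{T^2}$, so this term is at most $\frac{2\lambda^2 G^2(1+C)}{T^3}\cdot T^3=2\lambda^2 G^2(1+C)$, matching the second term of the claimed bound. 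Assembling the four simplified terms then yields the corollary.

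The main obstacle is not the parameter substitution, which is mechanical, but justifying the channel-mismatch estimate $\mb{E}_t\big\|\frac{h_l^{u,i}}{\hh_l^{u,i}}\big\|^2\le 1+C$, i.e.\ controlling the higher-order remainder dropped from $\frac{h_l^{u,i}}{\hh_l^{u,i}}=1-\frac{\Delta_l^{u,i}}{h_l^{u,i}}+\mc{O}\big((\Delta_l^{u,i}/h_l^{u,i})^2\big)$ and carefully tracking how the $N$ reflecting elements accumulate into the $N^2$ factor of $C$. Since this bound is inherited directly from Corollary~\ref{cor:convergence}, the present proof reduces to invoking it and executing the summation estimates above, with the smallness condition $\lambda<\epsilon$ ensuring the $\lambda$-dependent terms remain controlled.
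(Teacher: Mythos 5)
Your proof is correct and takes essentially the same route as the paper: it inherits the channel-mismatch estimate $\Eb\big\|h_l^{u,i}/\hh_l^{u,i}\big\|^2\le 1+C$ from Corollary~1 (Appendix~B) and then substitutes $\eta=1/T$, $\beta=T$, $\alpha_i=1/m$, $\tau_v^i=\tau_v$ into the four terms of Theorem~2, with the smallness of $\lambda$ invoked only to keep the $\lambda$-dependent terms controlled. In fact, your write-up supplies the summation arithmetic, $\sum_{s=0}^{T-1}(T-s)(T-s+1)=\tfrac{T(T+1)(T+2)}{3}\le T^3$, that the paper's one-line proof leaves implicit.
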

\begin{proof}[Proof] 

We use the result of $\mb{E}_t \bigg\| \frac{h_l^{u,i}}{\hh_l^{u,i}} \bigg\|^2$ in Appendix C. 
Given sufficiently small  $\lambda$, the regularization term's impact on convergence and boundedness is minimal, ensuring the optimization error remains bounded.
\end{proof}

\vspace{-0.2in}
\section{Numerical Results} 
\label{sec: exp}
\subsection{Set Up}
We conduct numerical experiments to evaluate the effectiveness of our cross-layer algorithms and compare them with the state-of-the-art.
We set up an RIS-assisted multiuser communication system.
Specifically, the system consists of $m=10$ users and a single RIS.
Following~\cite{liu2021risfl}, we consider a 3D coordinate system, where the location of BS is (-50,0,10) meters and RIS is deployed at (0,0,10) meters.
We distribute the users uniformly in the x-y plane within the range of [-20,0] meters in the x dimension and [-30,30] meters in the y dimension.
The channel model follows i.i.d. Gaussian distribution with scaling of the square root of path loss.
We consider the path loss model in~\cite{tang2020ris}.
The user-PS direct link model is  $G_{PS}G_{U}\left(\frac{3*10^8 m/s}{4 \pi f_c d_{UP}} \right)^{PL}$, where we set path loss exponent $PL=4$ to simulate weak direct link, $G_{PS}=5$dBi, $G_U=0$dBi are antenna gains, $d_{UP}$ is the user-PS distance, and $f_c=915$MHz is the carrier frequency.
The RIS assisted link model is $G_{PS}G_{U} G_{RIS} \frac{N^2 d_x d_y ((3*10^8 m/s)/f_c)^2 }{64 \pi^3 d_{RP}^2 d_{UR}^2}$, where $d_x=d_y=(3*10^7 m/s)/f_c$ are the dimensions of an RIS element, $G_{RIS}=5$dBi is the RIS antenna gain, and $d_{RP}, d_{UR}$ are RIS-PS distance, user-RIS distance, respectively.
To simulate an imperfect CSI scenario, we define the channel estimation error of each path as i.i.d. Gaussian random variable with variance $\sigt_h^2= 0.1 \sigma_{c,u}^2$.
The transmit uplink SNR is set to $20$dB, while the transmit downlink SNR is set to $30$dB since the server is usually power-sufficient.

\vspace{-0.1in}
\subsection{Results}
We consider a standard image classification task on the MNIST~\cite{lecun1998gradient} and Fashion-MNIST~\cite{xiao2017fashion} datasets, both widely used to evaluate FL algorithms.
We begin with MNIST, using RIS with $N=45$ elements for classification. A logistic regression model, consisting of 784 input neurons and 10 output neurons, is trained in an extreme non-i.i.d. setting where each local dataset contains data from only one class. The training parameters are: local epochs $E=1$, batch size $BS=30$, and learning rate $\eta=0.1$.

We next use Fashion-MNIST, using $N=100$ RIS elements to aid with this more complex dataset. A convolutional neural network (CNN) model is trained with two $5 \times 5$ convolution layers (10 and 20 channels), followed by $2 \times 2$ max pooling, a batch normalization layer, a fully connected layer with 50 units and ReLU activation, and a softmax output layer. The training parameters are $E=1$, $BS=64$, $\eta=0.05$, decaying every 50 rounds at 0.5. We use SGD with momentum 0.9 and weight decay 0.001. We assume a label-imbalanced non-i.i.d. setting, where data is distributed across edge devices using a Dirichlet distribution with concentration parameter $\gamma$, as a common choice to simulate real-world data distribution~\cite{li2022federated}. Initially, $\gamma=0.5$, then adjusted to 1 to reduce the non-i.i.d. level. Hyperparameters are fine-tuned with $E=1$, $BS=64$, and $\eta={0.1, 0.05}$.

We compare our proposed algorithms with several baselines:
\begin{list}{\labelitemi}{\leftmargin=1em \itemindent=-0.5em \itemsep=.2em}
\item Baseline 1: Algorithm from~\cite{li2022one}. A RIS is deployed to aid one-bit communication for FL, i.e., one-bit gradient quantization. The sub-band assignment, power allocation and RIS configuration are jointly optimized.
\item Baseline 2: Algorithm from~\cite{liu2021risfl}. A RIS is positioned between the server and the users to assist OTA-FL, where a unified design is proposed that includes active devices, RIS phase matrix and receiver beamforming. 
\end{list}

\begin{figure}[t] 
    \centering
    \includegraphics[width=0.85\linewidth]{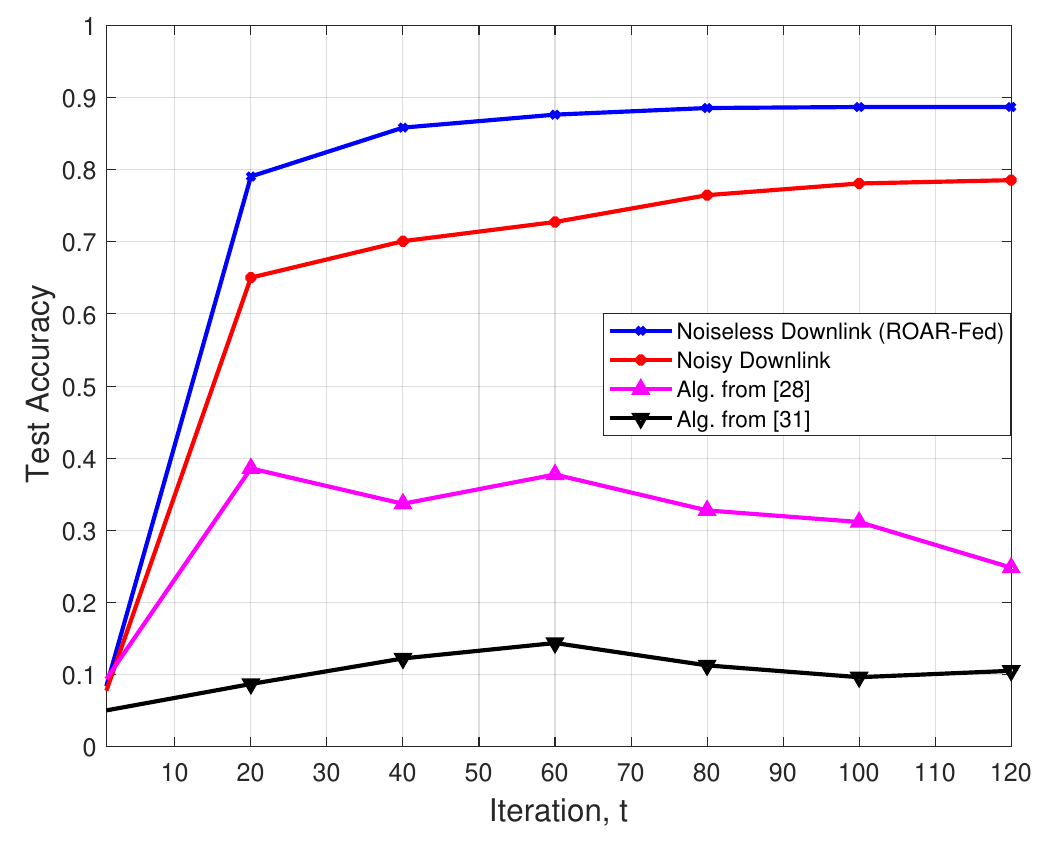}
    \vspace{-5pt}
    \caption{Test accuracy on the MNIST dataset.}
    \vspace{-10pt}
    \label{fig:result1}
\end{figure}

\begin{figure}[t] 
    \centering
    \includegraphics[width=0.85\linewidth]{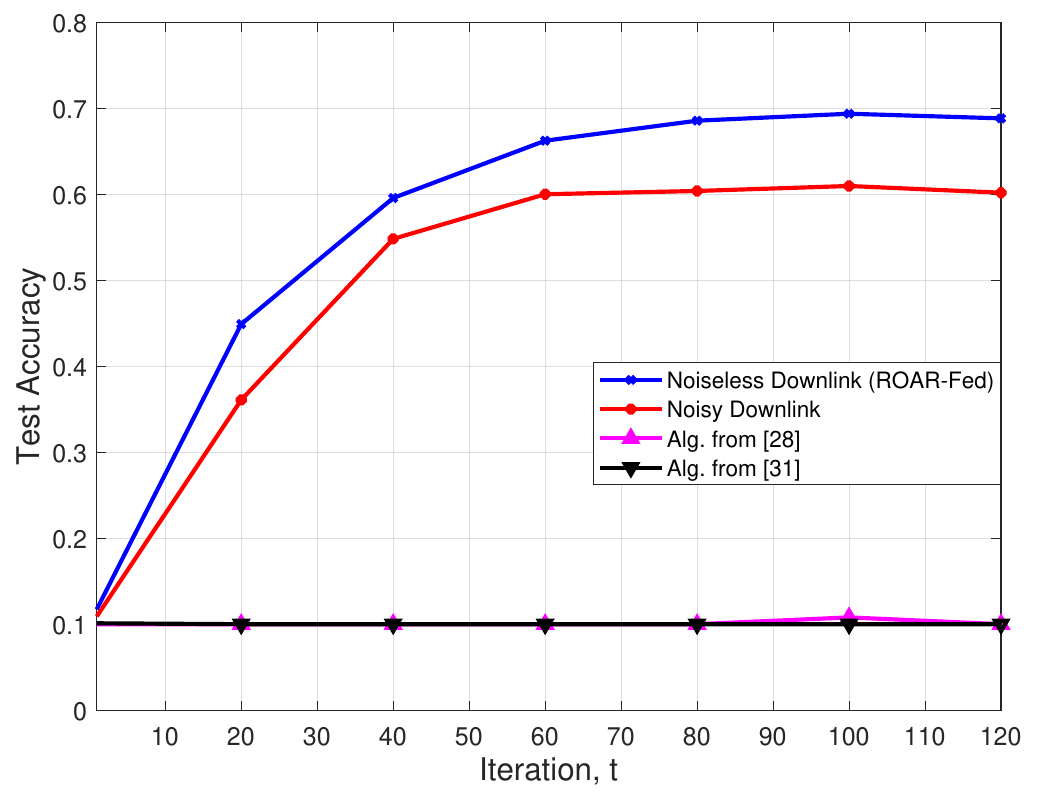}
    \vspace{-5pt}
    \caption{Test accuracy on Fashion-MNIST with $\gamma = 0.5$.}
    \vspace{-5pt}
    \label{fig:result2}
\end{figure}

\begin{figure}[t] 
    \centering
    \includegraphics[width=0.85\linewidth]{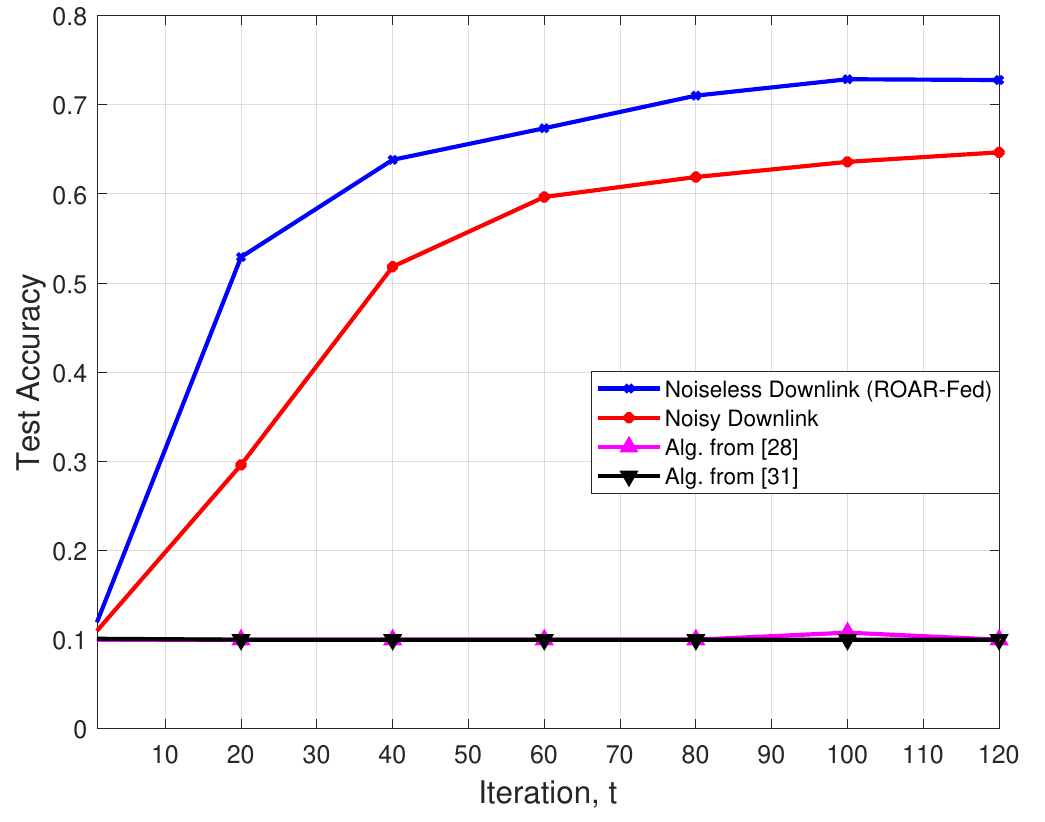}
    \vspace{-5pt}
    \caption{Test accuracy on Fashion-MNIST with $\gamma = 1$.}
    \vspace{-5pt}
    \label{fig:result3}
\end{figure}

Fig.~\ref{fig:result1} presents the test accuracy results on the MNIST dataset of the proposed Algorithm~\ref{alg:roar-fed}, Algorithm~\ref{alg:apaf} as well as the benchmark algorithms versus the number of communication rounds.
Similarly, Fig~\ref{fig:result2}. shows the test accuracy on the Fashion-MNIST dataset.
Notably, Algorithm~\ref{alg:apaf} attains outstanding learning performance even with both noisy uplink and downlink with imperfect CSI on both datasets.
The learning performance of Algorithm~\ref{alg:apaf} experiences some degradation, compared to the noiseless downlink case (\algns), as anticipated.
This highlights that noisy downlink has a nontrivial negative influence on learning performance, which matches our theoretical analysis. 
It is worth mentioning that both baseline algorithms~\cite{li2022one,liu2021risfl} fail to converge under such realistic conditions, especially in noisy time-varying channels with imperfect communication and non-i.i.d. data distribution.
The one-bit approach~\cite{li2022one} performs slightly better than the algorithm from~\cite{liu2021risfl} on the MNIST dataset, but it still fails to achieve good learning accuracy as shown in Fig.~\ref{fig:result1}.
In addition, none of them can get effective training on the Fashion-MNIST dataset in such a general setting.
By contrast, our algorithms exhibit both convergence and superior learning performance, thus verifying the effectiveness and robustness of our adaptive joint learning and communication design.
One can also infer that our approach achieves a degree of fairness, as it ensures an enhanced propagation environment and training performance for all edge devices in an equitable, iterative and distributed manner by the coupled design of local computation and RIS configuration.
Note that a larger number of local steps leads to a global model that is more biased towards that specific device and our choice of selection coupled with i.i.d. block fading model ensures fairness.
Finally, a comparison between Fig.\ref{fig:result2} and Fig.\ref{fig:result3} provides insights into the impact of data heterogeneity levels. As $\gamma$ increases, corresponding to a reduction in the non-i.i.d. level, an enhancement in the learning performance is observed for both noisy and error-free downlink scenarios, which aligns with intuition.

\begin{figure}[t] 
    \centering
    \includegraphics[width=0.85\linewidth]{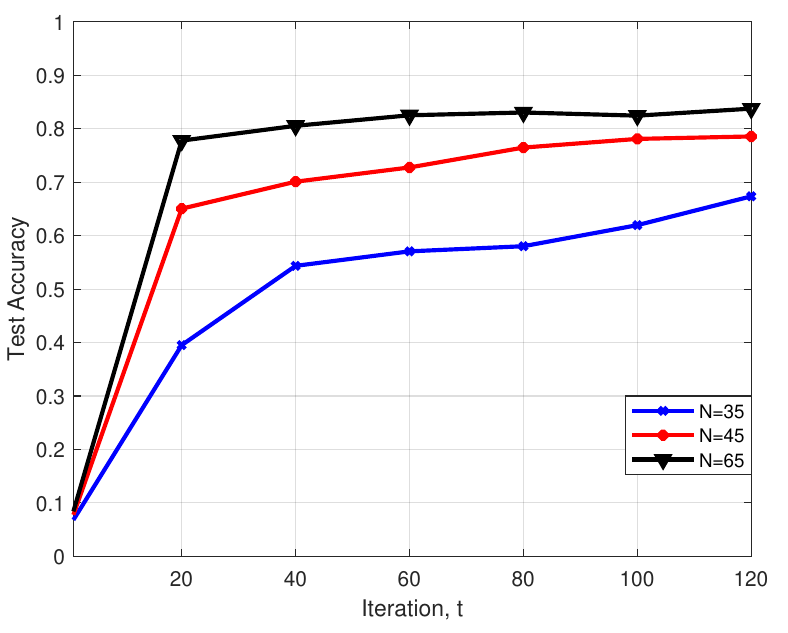}
    \vspace{-5pt}
    \caption{Test accuracy versus the number of RIS elements.}
    \vspace{-5pt}
    \label{fig:result4}
\end{figure}

Fig.~\ref{fig:result4} demonstrates the influence of the number of RIS reflecting elements $N$ on the MNIST dataset.
We notice that as $N$ increases, both the test accuracy and the rate of convergence improve, which aligns with our theoretical findings presented in Section~\ref{sec: conv}.
As $N$ grows, the channel estimation error increases, as illustrated in Corollary 1. 
However, the convergence upper bound is not dominated by this error term.
Instead, the local update error plays a more important role, which reduces with an increased $N$.
This is because a larger $N$ enables a better radio environment, which results in fewer local steps and a more substantial power control factor overall.

\vspace{-0.2in}
\subsection{Extension to Personalization}
We next consider a system with personal RISs, still having $m=10$ users, and each RIS has $N=10$ elements. Each RIS is deployed 2 meters above its owner. Other system settings remain the same. We set additional parameters as $\lambda=0.1$ and local steps $\tau_v^i=3$. 
We conduct image classification tasks using the Fashion-MNIST dataset, characterized by a non-i.i.d. data distribution, following a Dirichlet distribution with parameters $\gamma=0.1$ and $0.5$. For this task, we employ the same CNN model as used in our main results. We consider two benchmarks: 1) \algns, a central RIS at $(0, 0, 10)$ meters with $m \times N = 100$~\cite{ni2021fed} elements; 2) \alghota~\cite{mortaheb2022personalized}, a personalized OTA-FL method in a hierarchical structure without RIS. The system contains PS, intermediate servers, and clients grouped into clusters, with error-free PS-server links but wireless fading between servers and clients. For a fair comparison, we modify this by equating cluster and client counts to $m$, leading to one-to-one intermediate server-client communication. This aligns each client with a wireless fading channel, comparable to our system's configuration.
This approach also differs in its learning structure: each client's neural network comprises several shared global layers topped with a personalized layer. Its evaluation focuses solely on personalization accuracy, not on global task performance.

\begin{figure}[t] 
    \centering
    \includegraphics[width=0.8\linewidth]{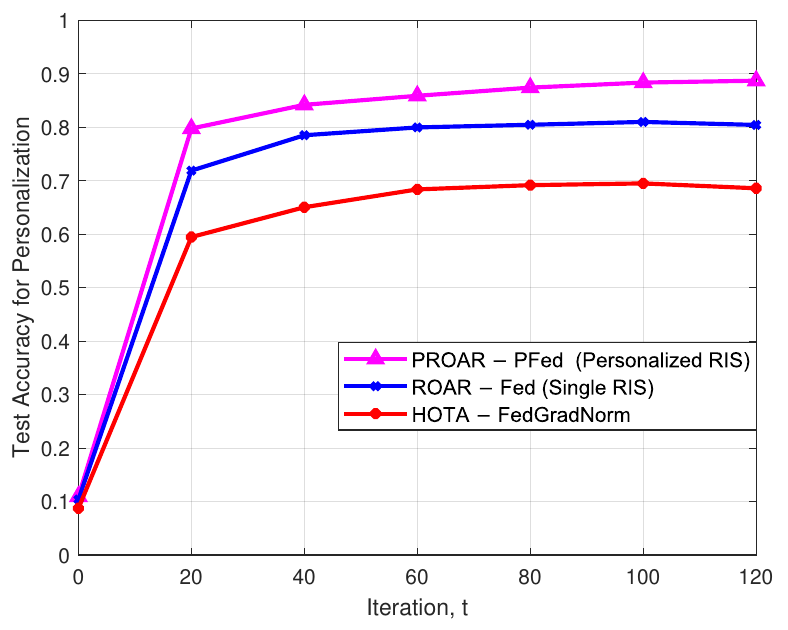}
    \vspace{-5pt}
    \caption{Average test accuracy for personalization with $\gamma=0.5$.}
    \vspace{-5pt}
    \label{fig:result_ext}
\end{figure}

Fig.~\ref{fig:result_ext} illustrates the average test accuracy across 10 users for personalized tasks when $\gamma=0.5$.
\algp~outperforms benchmarks, highlighting the advantage of personal RISs. This confirms the effectiveness of our joint communication and learning design using bi-level optimization, and suggests that smaller personalized RISs are preferable to a single large centralized RIS, especially in handling device heterogeneity or multi-task FL.

\begin{table}[t]
\caption{Average (5 trails) of the CNN test accuracy (\%) comparison on Fashion-MNIST with various non-i.i.d. level $\gamma$. We report \textbf{average} across users for personalized models. “N/A” means not applicable.}
\label{table:test_accuracy}
\centering
{\scriptsize
\begin{tabular}{|c|c|cc|}
\hline
\multicolumn{1}{|c|}{\multirow{2}{*}{\textbf{Non-IID}}} & \multicolumn{1}{c|}{\multirow{2}{*}{\textbf{Algorithm}}} & \multicolumn{2}{c|}{\textbf{FL Model}}                                        \\ \cline{3-4} 
\multicolumn{1}{|c|}{}                               & \multicolumn{1}{c|}{}                           & \multicolumn{1}{c|}{Global}  & \multicolumn{1}{c|}{Personalized}    \\ \hline
\multirow{3}{*}{$\gamma = 0.1$}                                 & {\cellcolor[gray]{.9}{\textbf{\algp} }}                                     & \multicolumn{1}{c|}{\cellcolor[gray]{.9} {\bf 63.08}} & \multicolumn{1}{c|}{\cellcolor[gray]{.9} {\bf 96.04 $\pm$ 2.69}}  \\ \cline{2-4} 
                                                     & \alg                                           & \multicolumn{1}{c|}{57.68} & \multicolumn{1}{c|}{88.97 $\pm$ 4.74}  \\ \cline{2-4} 
                                                     & \alghota                                       & \multicolumn{1}{c|}{N/A} & \multicolumn{1}{c|}{89.37 $\pm$ 6.97} \\ \hline
\multirow{3}{*}{$\gamma = 0.5$}                                 & {\cellcolor[gray]{.9}{\bf \algp}}                                     & \multicolumn{1}{c|}{\cellcolor[gray]{.9} {\bf 71.89}} & \multicolumn{1}{c|}{\cellcolor[gray]{.9} {\bf 88.89 $\pm$ 3.33}} \\ \cline{2-4} 
                                                     & \alg                                           & \multicolumn{1}{c|}{68.77} & \multicolumn{1}{c|}{80.84 $\pm$ 5.56} \\ \cline{2-4} 
                                                     & \alghota                                       & \multicolumn{1}{c|}{N/A} & \multicolumn{1}{c|}{69.77 $\pm$ 3.89} \\ \hline
\end{tabular}
}
\end{table}

Table~\ref{table:test_accuracy} shows the results of different non-i.i.d. cases.
Note that smaller $\gamma$ indicates a more unbalanced data partition.
Consequently, the performance of the global model for the $\gamma=0.1$ case is worse compared to the $\gamma=0.5$ scenario.
Notably, \alghota~has the lowest accuracy when $\gamma=0.5$, yet it surpasses \algns~when $\gamma=0.1$. 
This indicates that the design for personalized FL is more effective in scenarios with greater data heterogeneity, even in the absence of RIS assistance.
Compared with \algns, \algp~improves
global model learning, and personalized tasks outperform the global model significantly.
Furthermore, the performance difference between the personalized and global models grows as the data sets are more non-i.i.d., demonstrating that \algp~is better suited for handling highly heterogeneous situations. 

\vspace{-10pt}
\section{Conclusion} 
\label{sec: conclusion}

In this paper, we have proposed a cross-layer approach that jointly optimizes computation and communication resources to enhance learning performance in RIS-assisted OTA-FL systems.
Unlike existing joint designs, we have considered the practical time-varying physical layer, i.e., uplink and downlink noise, and imperfect CSI at edge devices.
The developed method adaptively adjusts RIS phase shifts, local update steps, and transmission power during each global iteration, mitigating the impacts of time-varying estimated CSI and non-i.i.d local data distribution.
We have provided a comprehensive convergence proof of the proposed algorithms, discussed sources of error for convergence upper bounds, and analyzed the impact of the number of RIS reflecting elements.
We have extended our framework to the personal RIS model and illustrated its capability to personalized federated learning.
Finally, we have demonstrated the effectiveness and robustness of our proposed approach through numerical results, which outperforms the state-of-the-art unified communication and learning frameworks in realistic scenarios with data heterogeneity and imperfect CSI.

\appendices
\allowdisplaybreaks

\vspace{-8pt}
\section{Proof of Theorem 1}
\label{sec:proof}
The global model update in $t$-th round is:
\begin{align}
    &\w_{t+1} - \w_{t} = \sum_{i=1}^{m} \frac{\beta_t^i}{\beta_t^u} h_t^{u,i} \left(\w^i_{t, \tau^i_t} - \w^i_{t, 0}\right) + \tilde{\z}_t \\
    &= \sum_{i=1}^{m} \frac{\alpha_i}{\tau^i_t} \frac{h_t^{u,i}}{\hh_t^{u,i}} \left(\w^i_{t, \tau^i_t} - \w^i_{t, 0}\right) + \tilde{\z}_t \\
    &= -\sum_{i=1}^{m} \frac{\alpha_i}{\tau^i_t}\frac{h_t^{u,i}}{\hh_t^{u,i}} \eta_t \sum_{k=0}^{\tau^i_t-1} \left(\nabla F_i(\w^i_{t, k}, \xi^i_{t, k})\right) + \tilde{\z}_t.
\end{align}
We first take conditional expectation with respect to $\w_t$.
Assuming that stochastic gradient noise, channel noise and channel estimation error are independent, by Assumption 1:
\begin{align}
    &\mb{E}_t [F(\w_{t+1})] - F(\w_t) \leq \nonumber \\
    & \left< \nabla F(\w_t), \mb{E}_t \left[\w_{t+1} - \w_t \right] \right>  + \frac{L}{2} \mb{E}_t \left[\| \w_{t+1} - \w_t \|^2 \right] \\
    &= \frac{L}{2} \mb{E}_t \bigg\| - \sum_{i=1}^{m} \eta_t\frac{\alpha_i }{\tau^i_t}\frac{h_t^{u,i}}{\hh_t^{u,i}} \sum_{k=0}^{\tau^i_t-1} \left(\nabla F_i(\w^i_{t, k}, \xi^i_{t, k})\right)  + \tilde{\z}_t \bigg\|^2 \nonumber\\
    &- \left< \nabla F(\w_t), \mb{E}_t \left[\sum_{i=1}^{m} \frac{\alpha_i}{\tau^i_t} \frac{h_t^{u,i}}{\hh_t^{u,i}} \eta_t \sum_{k=0}^{\tau^i_t-1} \left(\nabla F_i(\w^i_{t, k})\right)\right] \right> \\
    & = \frac{L}{2} \mb{E}_t \bigg\| - \sum_{i=1}^{m} \frac{\alpha_i \eta_t}{\tau^i_t} \frac{h_t^{u,i}}{\hh_t^{u,i}} \sum_{k=0}^{\tau^i_t-1} \left(\nabla F_i(\w^i_{t, k}, \xi^i_{t, k})\right) + \tilde{\z}_t \bigg\|^2 + \nonumber \\
    & \eta_t \left< \nabla F(\w_t), \nabla F(\w_t) - \mb{E}_t \left[\sum_{i=1}^{m} \frac{\alpha_i}{\tau^i_t}\frac{h_t^{u,i}}{\hh_t^{u,i}} \sum_{k=0}^{\tau^i_t-1} \left(\nabla F_i(\w^i_{t, k})\right) \right] \right> \nonumber\\
    & - \eta_t \| \nabla F(\w_t) \|^2 \\
    &= - \frac{1}{2} \eta_t \mb{E}_t \bigg\| \sum_{i=1}^{m} \frac{\alpha_i}{\tau^i_t} \frac{h_t^{u,i}}{\hh_t^{u,i}} \sum_{k=0}^{\tau^i_t-1} \left(\nabla F_i(\w^i_{t, k})\right) \bigg\|^2  \nonumber \\
    & + \frac{1}{2} \eta_t \mb{E}_t \bigg\| \nabla F(\w_t) - \sum_{i=1}^{m} \frac{\alpha_i}{\tau^i_t} \frac{h_t^{u,i}}{\hh_t^{u,i}}\sum_{k=0}^{\tau^i_t-1} \left(\nabla F_i(\w^i_{t, k})\right) \bigg\|^2 \nonumber\\
    &+ \frac{L}{2} \mb{E}_t \bigg\| - \sum_{i=1}^{m} \frac{\alpha_i \eta_t}{\tau^i_t}\frac{h_t^{u,i}}{\hh_t^{u,i}} \sum_{k=0}^{\tau^i_t-1} \left(\nabla F_i(\w^i_{t, k}, \xi^i_{t, k})\right) + \tilde{\z}_t \bigg\|^2 \nonumber \\
    &- \frac{1}{2} \eta_t \| \nabla F(\w_t) \|^2\\
    &=- \frac{1}{2} \eta_t \| \nabla F(\w_t) \|^2  + \frac{L \sigma_c^2}{2\beta_t^2} \nonumber \\
    & - \frac{1}{2} \eta_t \mb{E}_t \bigg\| \sum_{i=1}^{m} \frac{\alpha_i}{\tau^i_t} \frac{h_t^{u,i}}{\hh_t^{u,i}} \sum_{k=0}^{\tau^i_t-1} \left(\nabla F_i(\w^i_{t, k})\right) \bigg\|^2 \nonumber \\
    &+ \frac{1}{2} \eta_t \mb{E}_t \bigg\| \sum_{i=1}^{m} \frac{\alpha_i}{\tau^i_t} \sum_{k=0}^{\tau^i_t-1} \left(\nabla F_i(\w_t) - \frac{h_t^{u,i}}{\hh_t^{u,i}} \nabla F_i(\w^i_{t, k})\right) \bigg\|^2  \nonumber \\
    &+ \frac{L \eta_t^2}{2} \mb{E}_t \bigg\| \sum_{i=1}^{m} \frac{\alpha_i}{\tau^i_t} \frac{h_t^{u,i}}{\hh_t^{u,i}}\sum_{k=0}^{\tau^i_t-1} \left(\nabla F_i(\w^i_{t, k}, \xi^i_{t, k})\right) \bigg\|^2 \\
    &\leq \frac{1}{2} \eta_t \mb{E}_t \bigg\| \sum_{i=1}^{m} \frac{\alpha_i}{\tau^i_t} \sum_{k=0}^{\tau^i_t-1} \left(\nabla F_i(\w_t) - \frac{h_t^{u,i}}{\hh_t^{u,i}} \nabla F_i(\w^i_{t, k})\right) \bigg\|^2 + \nonumber \\ 
    &\frac{L \eta_t^2}{2} \sum_{i=1}^{m} \mb{E}_t \bigg\| \frac{\alpha_i}{\tau^i_t}\frac{h_t^{u,i}}{\hh_t^{u,i}}  \sum_{k=0}^{\tau^i_t-1} \left(\nabla F_i(\w^i_{t, k}, \xi^i_{t, k}) - \nabla F_i(\w^i_{t, k})\right) \bigg\|^2 \nonumber \\ & - \frac{1}{2} \eta_t \| \nabla F(\w_t) \|^2 + \frac{L \sigma_c^2}{2\beta_t^2} \label{ineq: Lsmooth} 
\end{align}
The first inequality holds if $\eta_t \leq \frac{1}{L}$. Since learning process and channel estimation are independent,
\begin{align}
    &\frac{ L \eta_t^2}{2} \sum_{i=1}^{m} \mb{E}_t \bigg\| \frac{\alpha_i}{\tau^i_t}\frac{h_t^{u,i}}{\hh_t^{u,i}}  \sum_{k=0}^{\tau^i_t-1} \left(\nabla F_i(\w^i_{t, k}, \xi^i_{t, k}) - \nabla F_i(\w^i_{t, k})\right) \bigg\|^2  \leq \nonumber\\
    &
    \sum_{i=1}^m \frac{ L \eta_t^2(\alpha_i)^2}{2(\tau^i_t)^2} \mb{E}_t  \bigg\| \frac{h_t^{u,i}}{\hh_t^{u,i}} \bigg\|^2 \times \nonumber \\
    &\mb{E}_t \bigg\|  \sum_{k=0}^{\tau^i_t-1} \left(\nabla F_i(\w^i_{t, k}, \xi^i_{t, k}) - \nabla F_i(\w^i_{t, k})\right) \bigg\|^2 \\
    &\leq \sum_{i=1}^m \frac{L \eta_t^2 (\alpha_i)^2}{2 \tau^i_t} \mb{E}_t  \bigg\| \frac{h_t^{u,i}}{\hh_t^{u,i}} \bigg\|^2 \times \nonumber \\
    & \sum_{k=0}^{\tau^i_t-1} \mb{E}_t \bigg\| \nabla F_i(\w^i_{t, k}, \xi^i_{t, k}) - \nabla F_i(\w^i_{t, k}) \bigg\|^2 \\
    &\leq \frac{ L \eta_t^2}{2} \sum_{i=1}^m (\alpha_i)^2 \mb{E}_t  \bigg\| \frac{h_t^{u,i}}{\hh_t^{u,i}} \bigg\|^2 \sigma^2
     \label{ineq: boundedvar}
\end{align}
Using Jensen's Inequality to Assumption 2 and 3, we get $\bigg\| \nabla F_i(\w^i_{t, k}) \bigg\|^2 \leq G^2$. 
\begin{align}
    &\frac{1}{2} \eta_t \mb{E}_t \bigg\| \sum_{i=1}^{m} \frac{\alpha_i}{\tau^i_t} \sum_{k=0}^{\tau^i_t-1} \left(\nabla F_i(\w_t) - \frac{h_t^{u,i}}{\hh_t^{u,i}}\nabla F_i(\w^i_{t, k})\right) \bigg\|^2 \nonumber \\
    &\leq \frac{1}{2} \eta_t m \sum_{i=1}^m \frac{\alpha_i^2}{(\tau^i_t)^2} \mb{E}_t \bigg\|  \sum_{k=0}^{\tau^i_t-1} \nabla F_i(\w_t) - \frac{h_t^{u,i}}{\hh_t^{u,i}} \nabla F_i(\w^i_{t, k}) \bigg\|^2 \\
    &\leq \frac{1}{2} \eta_t m \sum_{i=1}^m \frac{(\alpha_i)^2}{\tau^i_t} \sum_{k=0}^{\tau^i_t-1} \mb{E}_t \bigg\| \nabla F_i(\w_t) - \frac{h_t^{u,i}}{\hh_t^{u,i}}\nabla F_i(\w^i_{t, k}) \bigg\|^2 \\
    &= \frac{1}{2} \eta_t m \sum_{i=1}^m \frac{(\alpha_i)^2}{\tau^i_t} \sum_{k=0}^{\tau^i_t-1} \mb{E}_t \bigg\| \nabla F_i(\w_t) -\nabla F_i(\w^i_{t, k}) \nonumber \\
    &  + \nabla F_i(\w^i_{t, k}) - \frac{h_t^{u,i}}{\hh_t^{u,i}}\nabla F_i(\w^i_{t, k}) \bigg\|^2 \\
    &\leq \frac{1}{2} \eta_t m \sum_{i=1}^m \frac{(\alpha_i)^2}{\tau^i_t} \sum_{k=0}^{\tau^i_t-1} \mb{E}_t  \bigg( 2\bigg\| \nabla F_i(\w_t) -\nabla F_i(\w^i_{t, k}) \bigg\|^2 \nonumber \\
    &+ 2 \bigg\| \nabla F_i(\w^i_{t, k}) - \frac{h_t^{u,i}}{\hh_t^{u,i}}\nabla F_i(\w^i_{t, k}) \bigg\|^2 \bigg) \\
    &\leq \eta_t m \sum_{i=1}^m \frac{(\alpha_i)^2}{\tau^i_t} \sum_{k=0}^{\tau^i_t-1} \bigg( L^2 \mb{E}_t \bigg\|\w_t - \w^i_{t, k} \bigg\|^2 \nonumber \\
    &+ \mb{E}_t \bigg\|1 - \frac{h_t^{u,i}}{\hh_t^{u,i}}\bigg\|^2 \mb{E}_t \bigg\| \nabla F_i(\w^i_{t, k}) \bigg\|^2 \bigg)
    \label{ineq: variance}
\end{align}
Note that noisy downlink results in a noisy starting point for each user in each training round. 
\begin{align}
    & \mb{E}_t \bigg\|\w_t - \w^i_{t, k} \bigg\|^2 = \mb{E}_t \bigg\|\w_t - \w_{t,0}^i + \w_{t,0}^i - \w^i_{t, k} \bigg\|^2  \\
    & = \mb{E}_t \bigg\|\w_t - \w_{t,0}^i \bigg\|^2 + \mb{E}_t \bigg\|\w_{t,0}^i - \w^i_{t, k} \bigg\|^2  \\
    & = \mb{E}_t \bigg\| \left( 1 - \frac{h_t^{d,i}}{\hh_t^{d,i}}\right)\w_t + \tilde{\z}_t^{d,i} \bigg\|^2 + \mb{E}_t \bigg\|\w_{t,0}^i - \w^i_{t, k} \bigg\|^2 \\
    & = \mb{E}_t \bigg\| \left( 1 - \frac{h_t^{d,i}}{\hh_t^{d,i}}\right)\w_t \bigg\|^2 + \mb{E}_t \bigg\| \tilde{\z}_t^{d,i} \bigg\|^2 \nonumber \\
    & + \eta_t^2 \mb{E}_t \bigg\|\sum_{j=0}^{k} \nabla F_i(\w^i_{t, j}, \xi^i_{t, j})  \bigg\|^2 \\
    & \leq \mb{E}_t \bigg\|1 - \frac{h_t^{d,i}}{\hh_t^{d,i}} \bigg\|^2 \mb{E}_t \|\w_t\|^2 + \frac{\sigma_{c,d}^2}{(\beta_t^d)^2\| \hh_t^{d,i}\|^2} + \eta_t^2 k^2 G^2
    \label{equ:noisydl}
\end{align}
Next, we compute the bound of $\mb{E}_t \|\w_t\|^2$.
\begin{align}
    \w_t = \w_0 + \sum_{l=0}^{t-1}\sum_{i=1}^m \frac{\beta_l^i}{\beta_l^u}h_l^{u,i}  \sum_{k=0}^{\tau^i_l-1} \left(-\eta_l\nabla F_i(\w^i_{l, k}, \xi^i_{l, k})\right) + \sum_{l=0}^{t-1} \tilde{\z}_l^u
\end{align}
By Cauchy–Schwarz inequality, we can get
\begin{align}
    & \mb{E}_t \|\w_t\|^2 \leq 2 \|\w_0\|^2 + \sum_{l=0}^{t-1} \frac{\sigma_{c,u}^2}{(\beta_l^u)^2} \nonumber \\
    &  2\mb{E} \|\sum_{l=0}^{t-1}\sum_{i=1}^m \frac{\beta_l^i}{\beta_l^u}h_l^{u,i}  \sum_{k=0}^{\tau^i_l-1} \left(-\eta_l\nabla F_i(\w^i_{l, k}, \xi^i_{l, k})\right)\|^2  \\
    & \leq  2 t \sum_{l=0}^{t-1} \eta_l^2  \sum_{i=1}^m (\frac{\beta_l^i}{\beta_l^u}h_l^{u,i})^2 \sum_{i=1}^m \mb{E} \|\sum_{k=0}^{\tau^i_l-1} \left(\nabla F_i(\w^i_{l, k}, \xi^i_{l, k})\right)\|^2 \nonumber \\
    & + 2 \|\w_0\|^2  + \sum_{l=0}^{t-1} \frac{\sigma_{c,u}^2}{(\beta_l^u)^2} \\
    & \leq 2 t \sum_{l=0}^{t-1} \eta_l^2  \sum_{i=1}^m (\frac{\beta_l^i}{\beta_l^u}h_l^{u,i})^2 \sum_{i=1}^m \tau^i_l \sum_{k=0}^{\tau^i_l-1} \mb{E} \|\left(\nabla F_i(\w^i_{l, k}, \xi^i_{l, k})\right)\|^2  \nonumber \\
    & + 2 \|\w_0\|^2 + \sum_{l=0}^{t-1} \frac{\sigma_{c,u}^2}{(\beta_l^u)^2} \\
    & \leq 2 \|\w_0\|^2 + 2 t \sum_{l=0}^{t-1} \eta_l^2  \sum_{i=1}^m (\frac{\beta_l^i}{\beta_l^u}h_l^{u,i})^2 m (\tau_l^i)^2 G^2 + \sum_{l=0}^{t-1} \frac{\sigma_{c,u}^2}{(\beta_l^u)^2} \nonumber \\
    & = 2 \|\w_0\|^2 + \sum_{l=0}^{t-1} \frac{\sigma_{c,u}^2}{(\beta_l^u)^2} + 2 t m G^2 \sum_{l=0}^{t-1} \eta_l^2 \sum_{i=1}^m \alpha_i^2 \left( \frac{h_t^{u,i}}{\hh_t^{u,i}}\right)^2 \nonumber
\end{align}
Define
\begin{align}
    V(t) = 2 \|\w_0\|^2 + \sum_{l=0}^{t-1} \frac{\sigma_{c,u}^2}{\beta_l^2} + 2 t m G^2 \sum_{l=0}^{t-1} \eta_l^2 \sum_{i=1}^m  \alpha_i^2 \left( \frac{h_t^{u,i}}{\hh_t^{u,i}}\right)^2
    \label{equ:noisydlvt}
\end{align}
Then we have $\mb{E}_t \|\w_t \|^2 \leq V(t)$.
Plug~\eqref{equ:noisydlvt} to~\eqref{equ:noisydl},
\begin{align}
    \mb{E}_t \bigg\|\w_t - \w^i_{t, k} \bigg\|^2\leq \mb{E}_t \bigg\|1 - \frac{h_t^{d,i}}{\hh_t^{d,i}} \bigg\|^2 V(t) + \frac{\sigma_{c,d}^2}{(\beta_t^d)^2 \| \hh_t^{d,i}\|^2} + \eta_t^2 k^2 G^2 \label{inequ:noisydlw}
\end{align}
Now, apply constraint~\eqref{inequ:betai} to~\eqref{ineq: variance}, we have
\begin{align}
    &\frac{1}{2} \eta_t \mb{E}_t \bigg\| \sum_{i=1}^{m} \frac{\alpha_i}{\tau^i_t} \sum_{k=0}^{\tau^i_t-1} \left(\nabla F_i(\w_t) - \frac{h_t^{u,i}}{\hh_t^{u,i}}\nabla F_i(\w^i_{t, k})\right) \bigg\|^2 \nonumber\\
    & \leq \eta_t m \sum_{i=1}^m \frac{(\alpha_i)^2}{\tau^i_t} \sum_{k=0}^{\tau^i_t-1} \bigg( L^2 
     \mb{E}_t \bigg\|1 - \frac{h_t^{d,i}}{\hh_t^{d,i}} \bigg\|^2 V(t)  \nonumber \\
     & +\frac{L^2 \sigma_{c,d}^2}{(\beta_t^d)^2 \| \hh_t^{d,i}\|^2} + L^2\eta_t^2 k^2 G^2 + \mb{E}_t \bigg\|1- \frac{h_t^{u,i}}{\hh_t^{u,i}}\bigg\|^2 G^2 \bigg)   \\
    & \leq \eta_t m \sum_{i=1}^m (\alpha_i)^2 L^2 \left( \mb{E}_t \bigg\|1 - \frac{h_t^{d,i}}{\hh_t^{d,i}} \bigg\|^2 V(t) + \frac{\sigma_{c,d}^2}{(\beta_t^d)^2 \| \hh_t^{d,i}\|^2} \right) \nonumber \\
    & + \eta_t^3 m L^2 \sum_{i=1}^m (\alpha_i)^2 (\tau^i_t)^2 G^2 + \eta_t m \sum_{i=1}^m (\alpha_i)^2 \mb{E}_t \bigg\|1- \frac{h_t^{u,i}}{\hh_t^{u,i}}\bigg\|^2 G^2 \nonumber\\
    & \leq \eta_t m \sum_{i=1}^m (\alpha_i)^2 L^2 \left( \mb{E}_t \bigg\|1 - \frac{h_t^{d,i}}{\hh_t^{d,i}} \bigg\|^2 V(t) + \frac{\sigma_{c,d}^2}{(\beta_t^d)^2 \| \hh_t^{d,i}\|^2} \right) \label{inequ:risimp} \\ 
    & + \frac{m L^2}{9 \eta_t G^2} \sum_{i=1}^m \left( \frac{\alpha_i P_t^i}{\beta_t^i}\right)^2 + \eta_t m \sum_{i=1}^m (\alpha_i)^2 \mb{E}_t \bigg\|1- \frac{h_t^{u,i}}{\hh_t^{u,i}}\bigg\|^2 G^2
    \label{ineq:ROAR-fed-noisydl}
\end{align}
Then plug inequality~\eqref{ineq: boundedvar} and~\eqref{ineq:ROAR-fed-noisydl} into \eqref{ineq: Lsmooth}, we have 
\begin{align}
    &\mb{E}_t [F(\w_{t+1})] - F(\w_t) \leq \left< \nabla F(\w_t), \mb{E}_t \left[\w_{t+1} - \w_t \right] \right> \nonumber \\ 
    & + \frac{L}{2} \mb{E}_t \left[\| \w_{t+1} - \w_t \|^2 \right] \\
    &\leq - \frac{1}{2} \eta_t \| \nabla F(\w_t) \|^2 + \frac{m L^2}{9 \eta_t G^2} \sum_{i=1}^m \left( \frac{\alpha_i P_t^i}{\beta_t^i}\right)^2   + \frac{L \sigma_{c,u}^2}{2(\beta_t^u)^2}\nonumber \\
    & + \eta_t m \sum_{i=1}^m (\alpha_i)^2 \bigg( \mb{E}_t \bigg\|1- \frac{h_t^{u,i}}{\hh_t^{u,i}}\bigg\|^2 G^2 + L^2 \mb{E}_t \bigg\|1 - \frac{h_t^{d,i}}{\hh_t^{d,i}} \bigg\|^2 V(t)   \nonumber \\
    &+ \frac{L^2\sigma_{c,d}^2}{(\beta_t^d)^2 \|\hh_t^{d,i}\|^2} \bigg) + \frac{L \eta_t^2}{2} \sum_{i=1}^m (\alpha_i)^2 \mb{E}_t  \bigg\| \frac{h_t^{u,i}}{\hh_t^{u,i}} \bigg\|^2 \sigma^2
\end{align}
Let $\eta_t = \eta$ be constant learning rate, $P_t^i = P_i$, $\frac{1}{\beta_i^2} = \frac{1}{T} \sum_{t=0}^{T-1} \frac{1}{(\beta_t^i)^2}$, after rearranging and telescoping, we obtain:
\begin{align}
    & \frac{1}{T} \sum_{t=0}^{T-1} \mb{E} \| \nabla F(\w_t) \|^2 \leq \frac{2 \left(F(\w_0) - F(\w_T) \right)}{T \eta} \nonumber \\
    & + \frac{2 m L^2}{9 \eta^2 G^2}  \sum_{i=1}^m  \frac{(\alpha_i)^2 P_i^2}{(\beta_i^2)} + \frac{L \sigma_{c,u}^2}{\eta \beta^2} \nonumber \\
    & + 2 m G^2  \frac{1}{T} \sum_{t=0}^{T-1} \sum_{i=1}^m (\alpha_i)^2 \mb{E}_t  \bigg\| 1 - \frac{h_t^{u,i}}{\hh_t^{u,i}} \bigg\|^2 \nonumber \\
    & +  2 m L^2 \frac{1}{T} \sum_{t=0}^{T-1} \sum_{i=1}^m (\alpha_i)^2 \mb{E}_t \bigg\|1 - \frac{h_t^{d,i}}{\hh_t^{d,i}} \bigg\|^2 V(t) \nonumber \\
    & + 2 m L^2 \frac{1}{T} \sum_{t=0}^{T-1} \sum_{i=1}^m (\alpha_i)^2 \frac{\sigma_{c,d}^2}{(\beta_t^d)^2 \| \hh_t^{d,i}\|^2} \nonumber \\
    & + L \eta \frac{1}{T} \sum_{t=0}^{T-1} \sum_{i=1}^m (\alpha_i)^2 \mb{E}_t  \bigg\| \frac{h_t^{u,i}}{\hh_t^{u,i}} \bigg\|^2 \sigma^2,
\end{align}
where $\frac{1}{\bar{\beta}^2} = \frac{1}{T} \sum_{t=0}^{T-1} \frac{1}{(\beta_t^u)^2}$. \\

\vspace{-0.3in}
\section{Proof of Corollary 1}
\label{sec:profcoro}
\vspace{-0.2in}
\begin{align}
    h_t^{u,i} = h_{UB,t}^i+(\h_{UR,t}^i)^H \The_t \h_{RB,t}
\end{align}
\begin{align}
    &\hh_t^{u,i}= h_{t}^{u,i} + \Delta_{UB,t} + (\h_{UR,t}^i)^H \The_t \boldsymbol{\Delta}_{RB,t}  \nonumber \\
    &+ (\boldsymbol{\Delta}_{UR,t}^i)^H \The_t \h_{RB,t} + (\boldsymbol{\Delta}_{UR,t}^i)^H \The_t \boldsymbol{\Delta}_{RB,t}
\end{align}
\begin{align}
    \frac{h_t^{u,i}}{\hh_t^{u,i}} & = \frac{1}{1 + \frac{\hh_t^{u,i} - h_t^{u,i}}{h_t^{u,i}}} 
\end{align}
\begin{equation}
    \text{Define }\Delta_t^i = \hh_t^{u,i} - h_t^{u,i}, A = \frac{\Delta_t^i}{ h_t^{u,i}}
\end{equation}
Use Taylor expansion, $\frac{h_t^{u,i}}{\hh_t^{u,i}} = \frac{1}{1 + \frac{\Delta_t^i}{h_t^{u,i}}} = 1 - \frac{\Delta_t^i}{h_t^{u,i}} + \mathcal{O}( (\frac{\Delta_t^i}{h_t^{u,i}})^2)$, and we can ignore the higher order $\mathcal{O}( (\frac{\Delta_t^i}{h_t^{u,i}})^2)$. Then,

\begin{equation}
    \bigg\| \frac{h_t^{u,i}}{\hh_t^{u,i}} \bigg\|^2 = \bigg\|1 - \frac{\Delta_t^i}{h_t^{u,i}} \bigg\|^2 = 1-2A + \|A\|^2
\end{equation}
Note that $\Eb[A] = 0$ because each path's estimation is i.i.d. with zero mean. Next, we focus on $\Eb[\|A\|^2]$.
\begin{align}
    \|A\|^2 & = \bigg \| \frac{\Delta_t^i}{ h_{UB,t}^i + (h_{UR,t}^i)^H diag (\h_{RB,t}) \theb_t}\bigg \|^2 \\
    & =   \frac{\|[\h_{UR,t}^i diag(\boldsymbol{\Delta}_{RB,t})+(\boldsymbol{\Delta}_{UR,t}^i)^H diag(\h_{RB,t})}{ \| h_{UB,t}^i + (\g_t^i)^H \theb_t \|^2} \nonumber \\
    &\frac{+ \Delta_{UB,t} +(\boldsymbol{\Delta}_{UR,t}^i)^H diag(\boldsymbol{\Delta}_{RB,t})]\theb_t\|^2}{ \| h_{UB,t}^i + (\g_t^i)^H \theb_t \|^2} 
\end{align}
Note that direct links are much weaker than RIS links, i.e., $h_{UB,t}^i \ll (\g_t^i)^H \theb_t$, and we have $\| \theb_t\|=\sqrt{N}$. Thus,
\begin{equation}
    0 < \| (\g_t^i)^H \theb_t\|^2 \leq N^2 g_m^2.
\end{equation}
\begin{align}
    &\Eb \| \Delta_{UB,t} + [\h_{UR,t}^i diag(\boldsymbol{\Delta}_{RB,t})+(\boldsymbol{\Delta}_{UR,t}^i)^H diag(\h_{RB,t}) \nonumber \\
    & + (\boldsymbol{\Delta}_{UR,t}^i)^H diag(\boldsymbol{\Delta}_{RB,t})]\theb_t\|^2 \\
    &= \Eb [\|\Delta_{UB,t} \|^2 + 2 \Delta_{UB,t} [\h_{UR,t}^i diag(\boldsymbol{\Delta}_{RB,t}) \nonumber \\
    &+(\boldsymbol{\Delta}_{UR,t}^i)^H diag(\h_{RB,t}) + (\boldsymbol{\Delta}_{UR,t}^i)^H diag(\boldsymbol{\Delta}_{RB,t})]\theb_t ] \nonumber \\
    &+ \Eb \| [\h_{UR,t}^i diag(\boldsymbol{\Delta}_{RB,t})+(\boldsymbol{\Delta}_{UR,t}^i)^H diag(\h_{RB,t}) \nonumber \\
    &+ (\boldsymbol{\Delta}_{UR,t}^i)^H diag(\boldsymbol{\Delta}_{RB,t})]\theb_t\|^2 \\
    & \leq \sigt_h^2 + \Eb \| \h_{UR,t}^i diag(\boldsymbol{\Delta}_{RB,t})+(\boldsymbol{\Delta}_{UR,t}^i)^H diag(\h_{RB,t}) \nonumber \\
    & + (\boldsymbol{\Delta}_{UR,t}^i)^H diag(\boldsymbol{\Delta}_{RB,t})\|^2 \|\theb_t\|^2 \\
    & \leq \sigt_h^2 + N (\Eb\| (\h_{UR,t}^i)^H diag (\boldsymbol{\Delta}_{RB,t})\|^2 + \nonumber \\
    & \Eb \|(\boldsymbol{\Delta}_{UR,t})^H diag(\h_{RB,t}) \|^2 + \Eb \|(\boldsymbol{\Delta}_{UR,t}^i)^H diag(\boldsymbol{\Delta}_{RB,t}) \|^2 ) \nonumber \\
    & \leq \sigt_h^2 + N^2 \sigt_h^2 (h_{UR,a}^2 + h_{RB,a}^2 + \sigt_h^2).
\end{align}
Hence,
\begin{equation}
    \Eb \|A \|^2 \leq \frac{\sigt_h^2 (1 + N^2(h_{UR,a}^2 + h_{RB,a}^2 + \sigt_h^2))}{\|h_{UB,m} \|^2}.
\end{equation}
Define $C = \frac{\sigt_h^2 (1 + N^2(h_{UR,a}^2 + h_{RB,a}^2 + \sigt_h^2))}{\|h_{UB,m} \|^2}$, we have:
\begin{equation}
    \Eb \bigg\| \frac{h_t^{u,i}}{\hh_t^{u,i}} \bigg\|^2 \leq 1 + C, \quad \Eb \bigg\| 1 - \frac{h_t^{u,i}}{\hh_t^{u,i}} \bigg\|^2 = \Eb \|A\|^2 \leq C.
\end{equation}
By plugging back the above inequalities to error terms, we get Corollary~\ref{cor:convergence}.

\vspace{-0.3in}
\section{Proof of Theorem 2}
\label{sec:profpersonal}
We first show that the local personalized objective function $R_i$ is $\sqrt{2L^2 + 2 \lambda^2}$-smooth and has bounded stochastic gradient with $\sigma^2$. Define $L_2 =\sqrt{2L^2 + 2 \lambda^2}$.
\begin{align}
    &\|\nabla R_i(\v_1;\w) - \nabla R_i(\v_2;\w) \|^2 = \|\nabla F_i(\v_1)+\lambda(\v_1 - \w) \nonumber \\
    &- (\nabla F_i(\v_2)+\lambda(\v_2 - \w))\|^2 = \|\nabla F_i(\v_1) - \nabla F_i(\v_2) \nonumber \\
    & + \lambda (\v_1 - \v_2)\|^2 \leq 2 \|\nabla F_i(\v_1) - \nabla F_i(\v_2) \|^2 + 2 \lambda^2 \|\v_1 - \v_2 \|^2 \nonumber \\
    & \leq (2L^2 + 2 \lambda^2) \|\v_1 - \v_2 \|^2.
\end{align}
\begin{align}
    &\Eb \|g_i(\v;\w) - \nabla R_i(\v;\w) \|^2 = \Eb \|\nabla F_i(\v,\xi_i) + \lambda(\v - \w)  \nonumber \\
    &- \nabla F_i(\v) - \lambda(\v - \w)\|^2 \nonumber \\
    &= \Eb \|\nabla F_i(\v,\xi_i) - \nabla F_i(\v)\|^2 \leq \sigma^2.
\end{align}
We take conditional expectation with respect to $\v^i_t$ and use smoothness to compute one step descent:
\begin{align}
    &\Eb_t [R_i(\v^i_{t+1};\w_s)] - R_i (\v^i_{t};\w_s) \leq \nonumber \\
    &<\nabla R_i(\v^i_t;\w_s), \Eb_t [\v^i_{t+1} - \v^i_{t}]> + \frac{L_2}{2} \Eb_t \|\v^i_{t+1} - \v^i_{t}\|^2 \nonumber \\
    & = - <\nabla R_i(\v^i_t;\w_s), \Eb_t [\eta_v g_i(\v^i_t;\w_s)]> \nonumber \\
    &+ \frac{L_2}{2} \eta_v^2 \Eb_t \|g_i(\v^i_t;\w_s)\|^2 = - \eta_v \|\nabla R_i(\v^i_t;\w_s) \|^2 \nonumber \\
    & + \frac{L_2}{2} \eta_v^2  \Eb_t \|g_i(\v^i_t;\w_s) - \nabla R_i(\v^i_t;\w_s) + \nabla R_i(\v^i_t;\w_s) \|^2 \nonumber \\
    & = - \eta_v \|\nabla R_i(\v^i_t;\w_s) \|^2 + \frac{L_2}{2} \eta_v^2 \Eb_t \|g_i(\v^i_t;\w_s) - \nabla R_i(\v^i_t;\w_s) \|^2 \nonumber \\
    & + \frac{L_2}{2} \eta_v^2 \Eb_t \|\nabla R_i(\v^i_t;\w_s) \|^2 \nonumber \\
    & = - (\eta_v - \frac{L_2}{2} \eta_v^2) \|\nabla R_i(\v^i_t;\w_s) \|^2 + \frac{L_2}{2} \eta_v^2 \sigma^2
\end{align}
Rearrange and average over $\tau_v^i$ steps,
\begin{align}
    \frac{1}{\tau_v^i} \sum_{t=0}^{\tau_v^i - 1} \Eb \|\nabla R_i(\v_t^i;\w_s) \|^2 &\leq \frac{R_i(\v_0^i;\w_s) - R_i(\v_{\tau_v-1}^i;\w_s)}{\tau_v (\eta_v - \frac{L_2}{2} \eta_v^2)} \nonumber \\
    & + \frac{L_2}{2} \eta_v^2 \sigma^2.
\end{align}
\begin{align}
    &\Eb \|\nabla R_i(\v_t^i;\w^*) \|^2 = \Eb \|\nabla R_i(\v_t^i;\w^*) - \nabla R_i(\v_t^i;\w_s)  \nonumber \\
    &+ \nabla R_i(\v_t^i;\w_s)\|^2 = \Eb \| \lambda(\w_s - \w^*) + \nabla R_i(\v_t^i;\w_s)\|^2 \nonumber \\
    & \leq 2 \lambda^2 \Eb \|\w_s - \w^*\|^2 + 2 \Eb \| \nabla R_i(\v_t^i;\w_s)\|^2
\end{align}
Next, we find the upper bound for $\Eb \|\w_s - \w^*\|^2$.
\begin{align}
    &\w_{T} - \w_{s} = - \sum_{l=s}^{T} \bigg( \sum_{i=1}^{m} \frac{\alpha_i}{\tau^i_l}\frac{h_l^{u,i}}{\hh_l^{u,i}} \eta_l \sum_{k=0}^{\tau^i_l-1} \left(\nabla F_i(\w^i_{l, k}, \xi^i_{l, k})\right) + \tilde{\z}_l^u \bigg) \nonumber
\end{align}
\begin{align}
    &\Eb \|\w_{T} - \w_{s}\|^2 \leq \Eb \| \sum_{l=s}^{T} \tilde{\z}_l^u\|^2 \nonumber \\
    & +\Eb \| \sum_{l=s}^{T} \sum_{i=1}^{m} \frac{\alpha_i}{\tau^i_l}\frac{h_l^{u,i}}{\hh_l^{u,i}} \eta_l \sum_{k=0}^{\tau^i_l-1} \left(\nabla F_i(\w^i_{l, k}, \xi^i_{l, k})\right)\|^2 \nonumber \\
    & \leq (T-s) \sum_{l=s}^{T} \Eb \|\sum_{i=1}^{m} \frac{\alpha_i}{\tau^i_l}\frac{h_l^{u,i}}{\hh_l^{u,i}} \eta_l \sum_{k=0}^{\tau^i_l-1} \left(\nabla F_i(\w^i_{l, k}, \xi^i_{l, k})\right)\|^2 \nonumber \\
    & + (T-s) \sum_{l=s}^{T} \bigg(\frac{\sigma_{c,u}}{\beta_l^u} \bigg)^2 \nonumber \\
    & \leq (T-s) \sum_{l=s}^{T} m \sum_{i=1}^{m} \Eb \| \frac{\alpha_i}{\tau^i_l}\frac{h_l^{u,i}}{\hh_l^{u,i}} \eta_l \sum_{k=0}^{\tau^i_l-1} \nabla F_i(\w^i_{l, k}, \xi^i_{l, k})\|^2 \nonumber \\
    & + (T-s) \sum_{l=s}^{T} \bigg(\frac{\sigma_{c,u}}{\beta_l^u} \bigg)^2 \nonumber \\ 
    & \leq (T-s) \sum_{l=s}^{T} m \sum_{i=1}^{m} \eta_l^2 \frac{\alpha_i^2}{\tau^i_l} \Eb \bigg\|\frac{h_l^{u,i}}{\hh_l^{u,i}} \bigg\|^2 \sum_{k=0}^{\tau^i_l-1} \Eb \|\nabla F_i(\w^i_{l, k}, \xi^i_{l, k})\|^2 \nonumber \\
    & + (T-s) \sum_{l=s}^{T} \bigg(\frac{\sigma_{c,u}}{\beta_l^u} \bigg)^2 \nonumber \\
    & = (T-s) \sum_{l=s}^{T} \bigg[ m \eta_l^2 G^2 \sum_{i=1}^{m} \alpha_i^2 \Eb \bigg\|\frac{h_l^{u,i}}{\hh_l^{u,i}} \bigg\|^2 + \bigg(\frac{\sigma_{c,u}}{\beta_l^u} \bigg)^2 \bigg]
\end{align}
Now rearrange and average over $T$ global rounds and $\tau_v^i = \tau_v$ local steps, use $\eta_l = \eta$, $\frac{1}{\bar{\beta}^2} = \frac{1}{T} \sum_{t=0}^{T-1} \frac{1}{(\beta_t^u)^2}$,
\begin{align}
    &\frac{1}{T}\sum_{s=0}^{T-1} \frac{1}{\tau_v} \sum_{t=0}^{\tau_v-1} \Eb \|\nabla R_i(\v^i_t;\w^*) \|^2 \leq \nonumber \\
    & \frac{1}{T}\sum_{s=0}^{T-1} \frac{1}{\tau_v} \sum_{t=0}^{\tau_v-1} 2 \lambda^2 \Eb \|\w_s - \w^*\|^2 + 2 \Eb \| \nabla R_i(\v_t^i;\w_s)\|^2 \nonumber \\
    & = \frac{1}{T}\sum_{s=0}^{T-1} 2 \lambda^2 \bigg[ (T-s) \sum_{l=s}^{T} m \eta^2 G^2 \sum_{i=1}^{m} \alpha_i^2 \Eb \bigg\|\frac{h_l^{u,i}}{\hh_l^{u,i}} \bigg\|^2 \nonumber \\
    &+ (T-s)^2 \bigg(\frac{\sigma_{c,u}}{\beta} \bigg)^2 \bigg] + \sqrt{2L^2 + 2 \lambda^2} \eta_v^2 \sigma^2 \nonumber \\
    &+ \frac{1}{T}\sum_{s=0}^{T-1} \frac{2[R_i(\v_s^i;\w_s) - R_i(\v_{s+1}^i;\w_s)]}{\tau_v (\eta_v - \frac{L_2}{2}\eta_v^2)}
\end{align}
where $\v_{s}^i$ is the last local step in global round $s$, i.e, $\v_{s}^i=\v_{s,\tau_v}^i$. 
Note that if $T\geq 4$, which is practical, then
\begin{align}
    \frac{2}{T} \sum_{s=0}^{T-1} (T-s)^2 = \frac{2}{T} \times \frac{T (T+1) (2T+1)}{6} \leq T^2
\end{align}
We arrive the final result:
\begin{align}
    & \min \Eb \|\nabla R_i(\v^i_t;\w^*) \|^2 \leq \sqrt{2L^2 + 2 \lambda^2} \eta_v^2 \sigma^2 \nonumber \\
    & \frac{1}{T}\sum_{s=0}^{T-1} 2 \lambda^2m \eta^2 G^2 (T-s) \sum_{l=s}^{T} \sum_{i=1}^{m} \alpha_i^2 \Eb \bigg\|\frac{h_l^{u,i}}{\hh_l^{u,i}} \bigg\|^2 + \lambda^2T^2 \frac{\sigma_{c,u}^2}{\beta^2} \nonumber \\
    & + \frac{1}{T}\sum_{s=0}^{T-1} \frac{2[R_i(\v_s^i;\w_s) - R_i(\v_{s+1}^i;\w_s)]}{\tau_v (\eta_v - \frac{\sqrt{2L^2 + 2 \lambda^2}}{2}\eta_v^2)}.
\end{align}
\vspace{-10pt}
\bibliographystyle{IEEEtran}{}
\bibliography{BIB/Optimization,BIB/Relay, BIB/RIS, BIB/ImperfectCSI, BIB/Yener, BIB/RISFL, BIB/Experiments, BIB/Introduction}

\end{document}